\documentclass[11pt]{article}

\usepackage[margin=1in]{geometry}
\usepackage{amsmath,amssymb,amsthm,mathtools}
\usepackage{booktabs}
\usepackage{tikz}
\usetikzlibrary{arrows.meta}
\usepackage{enumitem}
\usepackage{placeins}
\usepackage[linesnumbered,ruled,vlined]{algorithm2e}
\usepackage{microtype}
\usepackage[round]{natbib}
\usepackage[hidelinks]{hyperref}
\usepackage[capitalise]{cleveref}

\newtheorem{theorem}{Theorem}
\newtheorem{lemma}{Lemma}
\newtheorem{proposition}{Proposition}
\newtheorem{corollary}{Corollary}

\theoremstyle{definition}

\newtheorem{example}{Example}

\newcommand{\cl}{\operatorname{cl}}
\newcommand{\B}{\mathcal B}
\newcommand{\cF}{\mathcal F}
\SetKw{Return}{return}

\usepackage{authblk}

\title{Top Trading Cycles with Indifferences under Matroid Constraints}
\author{Yasushi Kawase}
\affil{Chuo University}
\date{}
\hypersetup{
  pdftitle={Top Trading Cycles with Indifferences under Matroid Constraints},
  pdfauthor={Yasushi Kawase}
}

\begin{document}
\maketitle

\begin{abstract}
We study the reallocation of indivisible objects among agents who may hold initial endowments and may be indifferent between objects.
Each agent has her own set of admissible objects, and a matroid constraint specifies which combinations of assigned objects are feasible.
The model contains the housing market, house allocation with newcomers who hold no endowment, and the assignment of students to schools under matroidal constraints at each school.
For this model, we introduce the top-class trading cycles mechanism (TCTC), which extends the top trading cycles mechanism (TTC) to matroid constraints and weak preferences.
When a trading cycle is selected, TCTC fixes each agent's top indifference class rather than a specific object and postpones the choice within the class to the end.
TCTC is individually rational, Pareto efficient, and weakly group-strategy-proof on the full domain of weak preferences, and it runs in polynomial time.
When preferences and reports are strict, TCTC is strongly group-strategy-proof.
Together with a known impossibility result, our theorem establishes a maximal domain of school-by-school constraints for individual rationality, Pareto efficiency, and strategy-proofness.
\end{abstract}

\section{Introduction}

The reallocation of indivisible objects among agents who may hold initial endowments is a classical problem in market design.
In the housing market of \citet{ShapleyScarf1974}, each agent owns one object, and the top trading cycles (TTC) mechanism is individually rational (IR), Pareto efficient (PE), and group-strategy-proof.
Applications differ from this benchmark in two respects.
Some agents, such as newcomers in house allocation~\citep{AbdulkadirogluSonmez1999}, hold no initial endowment.
Feasibility is subject to constraints beyond a capacity per object, such as type-specific or regional quotas in school assignment~\citep{AbdulkadirogluSonmez2003,KamadaKojima2015} and distributional requirements in teacher reassignment~\citep{CombeTercieuxTerrier2022}.

Both features are captured by the following framework.
Each agent has her own set of admissible objects, such as the seats she could take at schools or the option of remaining unmatched, and she may initially hold one of them.
An allocation selects one object for every agent, and a constraint specifies which combinations of selected objects are feasible, possibly coupling the choices of different agents.
An allocation is feasible when its assigned objects form a base of a given matroid.
The matroid exchange property provides feasible replacements of individual objects.
To perform several such replacements simultaneously, TTC-type mechanisms must also coordinate the choice of exchanges.
Under strict preferences, treating each object as a school embeds this framework in the M-convex model of \citet{SuzukiEtAl2023}, whose TTC-M mechanism guarantees IR, PE, and strong group-strategy-proofness.
For school-by-school constraints, \citet{ImamuraKawase2025} show that generalized matroids (g-matroids) form a maximal class of school constraints for which IR, PE, and strategy-proofness (SP) can be guaranteed.
These mechanisms, like TTC itself, rely on strict preferences.

Indifferences are common in applications, and in the housing market several TTC-type mechanisms handle them~\citep{AlcaldeUnzuMolis2011,JaramilloManjunath2012}.
To run a TTC-type mechanism under indifferences, one has to specify how an agent chooses among her tied objects when she points.
A naive choice rule can violate Pareto efficiency, and a rule designed to restore it can violate strategy-proofness.
We postpone the choice among tied objects and obtain IR, PE, and weak group-strategy-proofness (WGSP) on the full domain of weak preferences under matroid constraints.

\subsection{Our results}

We introduce the \emph{top-class trading cycles mechanism} (TCTC), which fixes each agent's top indifference class when she trades and selects the objects only at the end.
TCTC is IR, PE, and WGSP on the full domain of weak preferences under matroid constraints, and it runs in polynomial time given a membership oracle for the constraint (\Cref{thm:base-market}).
WGSP rules out every joint misreport that makes all coalition members strictly better off.
On the full domain of weak preferences, strong group-strategy-proofness is incompatible with PE even in the housing market~\citep{Ahmad2021}, and WGSP is therefore a natural group-incentive requirement.

When preferences and reports are strict, TCTC is an instance of the TTC-M mechanism of \citet{SuzukiEtAl2023}: each object is viewed as a school, and the priorities are chosen to match those of TCTC.
It therefore inherits strong group-strategy-proofness in every base market (\Cref{cor:strict-gsp}).
The impossibility result of \citet{ImamuraKawase2025} for school-by-school constraints holds already for strict preferences and thus also applies to the full domain of weak preferences.
Together with TCTC, it establishes that g-matroids remain a maximal domain of school constraints for IR, PE, and SP on the full domain of weak preferences (\Cref{cor:maximality}).
TCTC also rules out feasible coalitional improvements in which every member is strictly better off and the coalition's assigned objects span the same flat as its initial endowments (\Cref{cor:core}).
This property reduces to the usual core condition in housing markets.

\subsection{Related work}

TCTC belongs to the line of TTC mechanisms that starts with the housing market of \citet{ShapleyScarf1974}.
\citet{AbdulkadirogluSonmez1999} study house allocation with existing tenants and newcomers through a priority-based TTC algorithm.
In our base market, a newcomer is initially endowed with an agent-specific unmatched object, and under strict preferences and unit capacities TCTC specializes to their mechanism.

Indifferences led to several extensions of TTC in the housing market, including top trading absorbing sets~\citep{AlcaldeUnzuMolis2011}, top cycle rules~\citep{JaramilloManjunath2012}, and their unifications~\citep{AzizDeKeijzer2012,SabanSethuraman2013}.
Fixed exogenous tie-breaking followed by TTC is strategy-proof but in general guarantees only weak Pareto efficiency~\citep{Ehlers2014}.
\citet{Ahmad2021} provides a sufficient condition for WGSP and applies it to several extensions of TTC to weak preferences.
In the housing-market special case, TCTC recovers the IR, PE, and WGSP guarantees established in this literature.

A second line of work extends TTC to constrained markets with strict preferences.
\citet{SuzukiEtAl2023} develop TTC-M for M-convex distributional constraints on the numbers of students assigned to schools~\citep{Murota2003,MurotaShioura1999}.
\citet{ZhangTangYin2023} study matroid constraints on the set of holders of each object.
\citet{ImamuraKawase2025} impose the constraint on the set of matched student--school pairs, allow unmatched students, and formulate generalized TTC for g-matroid constraints, which contain both of the preceding classes.
These constraints may couple assignments across schools.
They obtain IR, PE, and strong group-strategy-proofness by reducing generalized TTC to TTC-M in a virtual market with M-convex constraints.
\citet{ZhangTangMiaoYin2024} propose a local-exchange mechanism for matroid constraints and weak preferences.\footnote{Contrary to the feasibility claim in their Theorem~1, their Algorithm~1 maps the initial allocation $(a,a,b)$ to $(a,b,a)$ when $a\succ_1 b$, $b\succ_2 a$, and $a\succ_3 b$, and the sole constraint is that agents~1 and~3 cannot both receive~$a$.}

\section{Preliminaries}\label{sec:preliminaries}

Let $U$ be a finite ground set.
For $X\subseteq U$ and $x,y\in U$, we write $X-x$ for $X\setminus\{x\}$ and $X+y$ for $X\cup\{y\}$.
A nonempty family $\B\subseteq 2^U$ is a \emph{matroid base family} if it satisfies the basis-exchange axiom: for any $B,B'\in\B$ and $x\in B\setminus B'$, there is a $y\in B'\setminus B$ such that $B-x+y\in\B$.
The members of $\B$ are called bases, and they all have the same number of elements.
This common number is called the \emph{rank} of $\B$.
A set is \emph{independent} if it is contained in a base, and an element of $U$ is a \emph{loop} if it belongs to no base.

\subsection{Base markets}\label{subsec:base-market}

A \emph{base market} is a tuple $(I,(E_i)_{i\in I},(\succsim_i)_{i\in I},\B,\omega)$.
The set $I=\{1,2,\ldots,n\}$ is a finite set of agents.
For each agent $i$, the finite set $E_i$ contains the \emph{objects} that can be assigned to agent $i$ and to no one else.
The sets $E_i$ are pairwise disjoint, and we write $E\coloneqq\bigcup_{i\in I}E_i$.
Each agent $i$ has a complete and transitive preference relation $\succsim_i$ on $E_i$.
We refer to such relations as weak preferences and write $\succ_i$ and $\sim_i$ for the strict preference and indifference relations induced by $\succsim_i$, respectively.
The family $\B$ is a rank-$n$ matroid base family on $E$.
We assume that $\B$ is accessible through a \emph{membership oracle}, which decides whether a given set of $n$ objects belongs to $\B$.
An \emph{allocation} is a map $\mu\colon I\to E$ with $\mu(i)\in E_i$ for every agent $i$, which we identify with the set $\{\mu(i)\mid i\in I\}$.
Thus, an allocation is a subset of $E$ that contains exactly one object from every $E_i$, and it is \emph{feasible} if it belongs to $\B$.
A base of $\B$ may contain more than one object from one set $E_i$ and none from another set $E_j$.
Such bases are not allocations, and the mechanism uses them only to test exchanges.
The \emph{initial endowment} $\omega$ is an allocation, and we assume that it is feasible.
For every $J\subseteq I$, write $\bar{J}\coloneqq I\setminus J$, $E_J\coloneqq\bigcup_{i\in J}E_i$, and $\omega(J)\coloneqq\{\omega(i)\mid i\in J\}$.

A direct mechanism $\Phi$ assigns a feasible allocation $\Phi(\succsim_I)$ to every preference profile $\succsim_I=(\succsim_i)_{i\in I}$, and $\Phi_i(\succsim_I)$ denotes the object assigned to agent $i$.
The mechanism is \emph{individually rational} (IR) if $\Phi_i(\succsim_I)\succsim_i\omega(i)$ for every $\succsim_I$ and every $i$.
It is \emph{Pareto efficient} (PE) if, for every preference profile $\succsim_I$, no feasible allocation $\mu$ satisfies $\mu(i)\succsim_i\Phi_i(\succsim_I)$ for every $i$ and $\mu(k)\succ_k\Phi_k(\succsim_I)$ for some $k$.
It is \emph{strategy-proof} (SP) if there are no profile $\succsim_I$, agent $i$, and report $\succsim'_i$ such that $\Phi_i(\succsim'_i,\succsim_{-i})\succ_i\Phi_i(\succsim_I)$.
It is \emph{weakly group-strategy-proof} (WGSP) if there are no profile $\succsim_I$, nonempty coalition $Q\subseteq I$, and joint report $\succsim'_Q$ such that $\Phi_i(\succsim'_Q,\succsim_{-Q})\succ_i\Phi_i(\succsim_I)$ for every $i\in Q$.
Thus, WGSP requires at least one coalition member not to improve strictly.
It is \emph{strongly group-strategy-proof} (strong GSP) if no coalition has a joint misreport that makes every member weakly better off and at least one member strictly better off.
Strong GSP implies WGSP, which implies SP.

\subsection{Student--school markets}\label{subsec:model}

An important special case is the assignment of students to schools, in which agents are students and objects are student--school pairs.
Let $S$ be a finite set of schools, and let $E_i\coloneqq\{i\}\times(S\cup\{\varnothing\})$ for every student $i$.
Thus, $E=I\times(S\cup\{\varnothing\})$.
The object $(i,s)$ assigns student $i$ to school $s$, and $(i,\varnothing)$ leaves her unmatched.
Thus, a weak preference on $E_i$ is a weak preference over $S\cup\{\varnothing\}$, and an allocation assigns each student to at most one school.
For $B\subseteq E$ and $s\in S$, let $B(s)\coloneqq\{i\in I\mid (i,s)\in B\}$ denote the set of students that $B$ assigns to $s$.

Each school $s$ has a nonempty family $\cF_s\subseteq 2^I$ of sets of students that it can accept.
Define
\begin{align*}
\B\coloneqq\left\{B\subseteq E\ \middle|\ |B|=|I|,\ B(s)\in\cF_s\ (\forall s\in S)\right\}.
\end{align*}
An allocation is therefore feasible if and only if each school $s$ can accept the set of students assigned to it.
A nonempty family $\cF\subseteq 2^U$ on a finite ground set $U$ is a \emph{generalized matroid}, or \emph{g-matroid}, if for any $X,Y\in\cF$ and any $e\in X\setminus Y$, at least one of the following holds~\citep{Tardos1985}:
\begin{enumerate}[label=\textup{(\roman*)},leftmargin=2.5em]
  \item $X-e\in\cF$ and $Y+e\in\cF$;
  \item there is an $e'\in Y\setminus X$ such that $X-e+e'\in\cF$ and $Y-e'+e\in\cF$.
\end{enumerate}
The direct sum of g-matroids on disjoint ground sets is a g-matroid, and a nonempty cardinality truncation $\{X\in\cF\mid |X|=k\}$ of a g-matroid is a matroid base family~\citep{Tardos1985}.
Suppose that every $\cF_s$ is a g-matroid.
View each $\cF_s$ as a g-matroid on the block $I\times\{s\}$ by replacing a set $X\subseteq I$ with $\{(i,s)\mid i\in X\}$, and place the g-matroid $\{\emptyset,\{(i,\varnothing)\}\}$ on each singleton $\{(i,\varnothing)\}$.
The direct sum of these g-matroids is $\{B\subseteq E\mid B(s)\in\cF_s\ (\forall s\in S)\}$, and $\B$ is its cardinality-$|I|$ truncation, which is nonempty because it contains $\omega$.
Hence, $\B$ is a matroid base family, and the student--school market is a base market.
Unit capacities give the housing market with weak preferences, and students with $\omega(i)=(i,\varnothing)$ are newcomers without a house, as in \citet{AbdulkadirogluSonmez1999}.

\section{The mechanism}\label{sec:mechanism}

We describe the mechanism for a base market $(I,(E_i)_{i\in I},(\succsim_i)_{i\in I},\B,\omega)$.
The mechanism uses a fixed priority order over agents, and we index the agents so that smaller indices have higher priority.
Fix a report-independent strict total order $\sqsubset$ on $E$.
For distinct bases $B,B'\in\B$, write $B\mathrel{\triangleleft}B'$ if the $\sqsubset$-minimum object in $B\mathbin{\triangle}B'$ belongs to $B$.
This defines a strict total order on $\B$, and it is used only to make the final allocation single-valued.

\subsection{Residual base family}\label{subsec:residual}

The mechanism processes agents in iterations.
When it processes an agent $f$, it fixes an indifference class $T_f\subseteq E_f$ but does not yet choose an object from that class.
Suppose that it has processed a subset $F\subseteq I$ of agents in this way.
Because the sets $E_i$ are pairwise disjoint, objects selected from the classes $(T_f)_{f\in F}$ are automatically distinct and do not belong to $E_{\bar{F}}$.
The \emph{residual base family} consists of the size-$|\bar{F}|$ subsets of $E_{\bar{F}}$ that can be completed to a base of $\B$ by selecting one object from each fixed class:
\begin{equation}
 \tilde{\B}\coloneqq\left\{A\subseteq E_{\bar{F}}\ \middle|\ |A|=|\bar{F}|,\ A\cup\{y_f\mid f\in F\}\in\B\text{ for some }(y_f)_{f\in F}\in\prod_{f\in F}T_f\right\}.
 \label{eq:residual}
\end{equation}
The family $\tilde{\B}$ is an auxiliary base family, and a member of $\tilde{\B}$ need not contain one object from every $E_i$ with $i\in\bar{F}$.
The mechanism uses this family only to test matroid exchanges.
The following lemma justifies applying matroid exchanges to the residual base family.

\begin{lemma}\label{lem:residual}
If $\tilde{\B}$ is nonempty, then it is a rank-$|\bar{F}|$ matroid base family on $E_{\bar{F}}$.
Moreover, every nonloop of $\tilde{\B}$ is a nonloop of $\B$.
\end{lemma}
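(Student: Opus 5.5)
The plan is to build $\tilde{\B}$ from $\B$ one fixed class at a time, using only standard matroid operations (principal extension, contraction, deletion), each of which yields a matroid. The nonloop statement will follow directly from the definition.

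\emph{Nonloops.} I dispose of this first. If $e$ is a nonloop of $\tilde{\B}$, then $e\in A$ for some $A\in\tilde{\B}$. By \eqref{eq:residual} there are $(y_f)_{f\in F}\in\prod_{f\in F}T_f$ with $A\cup\{y_f\mid f\in F\}\in\B$. This base contains $e$, so $e$ is a nonloop of $\B$.

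\emph{One class.} The core is the single-class case. Let $\mathcal{C}$ be a matroid base family of rank $r$ on a ground set $U$, and let $T\subseteq U$. I will show that
\[
\mathcal{C}_T\coloneqq\{A\subseteq U\setminus T\mid |A|=r-1,\ A+y\in\mathcal{C}\text{ for some }y\in T\}
\]
is, if nonempty, a rank-$(r-1)$ base family on $U\setminus T$.
\begin{enumerate}[label=(\arabic*),leftmargin=2.5em]
\item For $A\subseteq U\setminus T$ with $|A|=r-1$, there is $y\in T$ with $A+y\in\mathcal{C}$ if and only if $A$ is independent and $T\not\subseteq\cl(A)$. This is because $A+y$ is a base exactly when $A$ is independent and $y\notin\cl(A)$.
\item Let $M^{+}$ be the principal extension of the matroid $M$ of $\mathcal{C}$ by a new element $p$ placed freely on the flat $\cl(T)$. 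Its defining property is that $p\in\cl_{M^+}(X)$ if and only if $\cl(T)\subseteq\cl(X)$, which for closed $\cl(X)$ is equivalent to $T\subseteq\cl(X)$.
\item Hence $A+p$ is a base of $M^{+}$ if and only if $A$ is independent of size $r-1$ and $T\not\subseteq\cl(A)$, that is, if and only if $A\in\mathcal{C}_T$ by step (1).
\item Consider the matroid $(M^{+}/p)\setminus T$. If $T$ contains a nonloop, then $p$ is a nonloop of $M^{+}$, and $M^{+}/p$ has rank $r-1$ with bases $\{A\mid A+p\text{ a base of }M^{+}\}$.
\item Deleting $T$ keeps exactly those bases that avoid $T$, provided the rank does not drop. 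Nonemptiness of $\mathcal{C}_T$ guarantees that it does not drop.
\item Therefore $\mathcal{C}_T$ is the base family of $(M^{+}/p)\setminus T$, which is a matroid. If $T$ consists of loops, then $\mathcal{C}_T$ is empty, which is excluded by hypothesis.
\end{enumerate}

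\emph{Induction over $F=\{f_1,\dots,f_k\}$.} Set $\B_0=\B$ on $E$, and let $\B_j=(\B_{j-1})_{T_{f_j}}$ on $E\setminus(T_{f_1}\cup\dots\cup T_{f_j})$. Unwinding the definitions, $A\in\B_k$ if and only if $A\subseteq E\setminus\bigcup_j T_{f_j}$, $|A|=n-k$, and $A\cup\{y_{f_1},\dots,y_{f_k}\}\in\B$ for some choice $y_{f_j}\in T_{f_j}$.

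Such an $A$ meets no $E_f$ with $f\in F$. The reason is that the completed base has size $n$ while $|A|+k=n$ with the $y$'s distinct from $A$. However, I still need $A$ to avoid $E_f\setminus T_f$. To handle this, I first pass to the ground set $E_{\bar F}\cup\bigcup_f T_f$, deleting $E_F\setminus\bigcup_f T_f$. Restricting $\B$ to bases inside this set again gives a matroid base family of rank $n$, as long as one such base exists, and nonemptiness of $\tilde{\B}$ guarantees this. After that deletion, $\B_k=\tilde{\B}$ exactly.

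Each intermediate $\B_j$ is nonempty because $\tilde{\B}\neq\emptyset$: any witness for $\tilde{\B}$ projects to witnesses for every $\B_j$. So the single-class step applies at every stage.

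The main obstacle is step (2) of the single-class case: stating the principal extension correctly and checking the equivalence "$p\in\cl(X)\iff T\subseteq\cl(X)$". If I want to avoid citing principal extensions, I would instead verify the basis-exchange axiom for $\mathcal{C}_T$ directly, using the characterization in step (1).
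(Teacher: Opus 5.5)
Your proof is correct, but it takes a different route from the paper's. Both arguments peel off one fixed class at a time.

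\textbf{The paper's route.} It never leaves the ground set $E_{\bar F}$. Writing $\B'$ for the residual family of $F-f$, it notes that $A\in\tilde\B$ if and only if $A+t\in\B'$ for some $t\in T_f$, and then checks the basis-exchange axiom by hand. It exchanges $x$ out of $A+t$ against $A'+t'$. If the incoming element happens to be $t'$, it exchanges $t$ out of $A-x+t+t'$ against $A'+t'$. That second exchange must bring in some $z\in A'\setminus A$, because $x\notin A'$. This is essentially the fallback you mention at the end.

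\textbf{Your route.} You identify each step as a standard construction: the base family of $(M^+/p)\setminus T$, where $M^+$ is the principal extension by $p$ on $\cl(T)$, after a preliminary restriction to $E_{\bar F}\cup\bigcup_{f}T_f$.

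\textbf{Comparison.} Your approach gives a structural description: $\tilde\B$ is a minor of an extension of $\B$. In the one-class case it also gives a closure criterion: $A$ is a residual base if and only if $A$ is independent, avoids $T$, and $T\not\subseteq\cl(A)$. The price is that it relies on the theory of modular cuts to know that $M^+$ is a matroid. The paper's argument uses only the exchange axiom. It also handles the restriction to $E_{\bar F}$ automatically, because the exchanged elements are drawn from $A'\subseteq E_{\bar F}$.

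\textbf{A false sentence to remove.} The claim ``Such an $A$ meets no $E_f$ with $f\in F$'' is false. The size count only shows that $A$ avoids the classes $T_f$, which already holds by definition. A base of $\B$ may contain several objects of one $E_f$. For instance, in Example~\ref{ex:school} we have $\{(2,a),(2,b)\}\in\B$. So with $F=\{2\}$ and $T_2=\{(2,b)\}$, the set $\{(2,a)\}$ lies in $\B_1$ as first defined (before your deletion) but not in $\tilde\B$. Your subsequent deletion of $E_F\setminus\bigcup_f T_f$ is exactly what repairs this, so the proof stands once that sentence is dropped.
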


\begin{proof}
We prove the first assertion by induction on $|F|$.
If $F=\emptyset$, then $\tilde{\B}=\B$.
Otherwise, fix $f\in F$ and let $\B'$ be the residual base family for $F-f$ on $E_{\bar F}\cup E_f$.
This family is nonempty because any completion witnessing $\tilde{\B}\ne\emptyset$ also yields a member of $\B'$.
Assume, as the induction hypothesis, that $\B'$ is a matroid base family.
For $A\subseteq E_{\bar F}$, \eqref{eq:residual} gives $A\in\tilde{\B}$ if and only if $A+t\in\B'$ for some $t\in T_f$.

Take $A,A'\in\tilde{\B}$ and $x\in A\setminus A'$, and choose $t,t'\in T_f$ with $A+t,A'+t'\in\B'$.
By basis exchange, $A+t-x+y\in\B'$ for some $y\in(A'+t')\setminus(A+t)$.
If $y\in A'\setminus A$, then $A-x+y\in\tilde{\B}$.
Otherwise, $y=t'\ne t$, and $A-x+t+t'\in\B'$.
Exchange again with $A'+t'$, now removing $t$.
Since $t'$ belongs to both bases and $x\notin A'$, the inserted object lies in $A'\setminus A$.
Hence, $A-x+z+t'\in\B'$ for some $z\in A'\setminus A$, giving $A-x+z\in\tilde{\B}$.
This proves basis exchange for $\tilde{\B}$, whose members all have size $|\bar F|$.

Finally, every member of $\tilde{\B}$ extends to a base of $\B$, which gives the nonloop claim.
\end{proof}

\subsection{Pointing and cycle processing}\label{subsec:rule}

We now describe one iteration of the mechanism.
In each iteration, every remaining agent determines her top class and points to a remaining agent.
The mechanism then selects a cycle of the resulting pointing graph and fixes the top classes of the agents on the cycle.
After the last iteration, it selects the final allocation from the fixed classes.
A \emph{state} of the mechanism consists of the set $F$ of processed agents and their fixed classes $(T_f)_{f\in F}$, and it determines the residual base family $\tilde{\B}$ through \eqref{eq:residual}.
Initially, $F=\emptyset$ and $\tilde{\B}=\B$.
The mechanism updates the state so that the initial endowments of the remaining agents always form a base of the residual base family, that is, $\omega(\bar{F})\in\tilde{\B}$.

Each remaining agent points to a remaining agent, possibly herself.
The nonloops of $\tilde{\B}$ are precisely the objects contained in at least one base of $\tilde{\B}$.
For each remaining agent $i\in\bar{F}$, let $T_i$ be her most preferred indifference class in $E_i$ that contains a nonloop of $\tilde{\B}$.
We call $T_i$ agent $i$'s \emph{current top class}.
This class exists because $\omega(i)$ belongs to the base $\omega(\bar{F})$ of $\tilde{\B}$ and is therefore a nonloop.
Define the pointing map $\tau\colon\bar{F}\to\bar{F}$ by
\begin{equation}
  \tau(i)\coloneqq\min\left\{j\in\bar{F}\ \middle|\ \omega(\bar{F})-\omega(j)+a\in\tilde{\B}\text{ for some }a\in T_i\right\},
  \label{eq:target}
\end{equation}
where the minimum is taken with respect to the index priority.
The set in \eqref{eq:target} is nonempty because $T_i$ contains a nonloop and $\omega(\bar{F})$ is a base of $\tilde{\B}$.
Thus, agent $i$ points to the highest-priority remaining agent whose initial endowment can be replaced by some object in her top class, and she does not select a particular object from that class.
All agents choose their arrows simultaneously from the same state.
An arrow entering $i$ indicates that $\omega(i)$ is replaced if the selected cycle contains $i$.

The \emph{pointing graph} is the directed graph on $\bar{F}$ with the arrows $(i,\tau(i))$ for $i\in\bar{F}$.
Every remaining agent has one outgoing arrow, and hence the pointing graph contains a directed cycle.
The mechanism selects any such cycle $C$ and processes it.
Each agent in $C$ is the target of exactly one arrow from an agent in $C$, and hence the initial endowment of every agent in $C$ is used exactly once.
\emph{Processing the cycle $C$} means fixing the current class of each agent in $C$, adding the agents in $C$ to $F$, and redefining the residual base family by \eqref{eq:residual}.
No particular object is selected from a fixed class at this stage.
After the last iteration, the mechanism selects a base of $\B$ that contains an object from every fixed class.
Because $T_i\subseteq E_i$ for every $i$, such a base contains exactly one object from every $E_i$ and is therefore a feasible allocation.

We call the following mechanism the \emph{top-class trading cycles mechanism} (TCTC), and write $\Phi$ for the direct mechanism that it defines.
\Cref{alg:tctc} gives its complete description.

\begin{algorithm}[htbp]
\caption{Top-class trading cycles (TCTC)}\label{alg:tctc}
\small
$F\gets\emptyset$ and $\tilde{\B}\gets\B$\;
\While{$F\ne I$}{
  For each $i\in\bar F$, compute her most preferred indifference class $T_i$ containing a nonloop of $\tilde\B$\;
  Compute $\tau(i)$ for every $i\in\bar F$ according to \eqref{eq:target}\;
  Choose any directed cycle $C$ of the pointing graph at the current state\;
  Fix $T_i$ for every $i\in C$ without selecting an object for any agent\;
  $F\gets F\cup C$\;
  Redefine $\tilde{\B}$ from $\B$ and $(T_f)_{f\in F}$ according to \eqref{eq:residual}\;
}
\Return the minimum $B\in\B$ with respect to $\mathrel{\triangleleft}$ that contains an object from every class $T_i$\;
\end{algorithm}
\FloatBarrier

In the next section, we show that TCTC produces a feasible allocation that does not depend on which cycle is selected at each step, and that it is IR, PE, and WGSP.
Before turning to the analysis, we illustrate the mechanism on an instance in which an early choice within a class can lead to Pareto inefficiency.

\begin{example}\label{ex:school}
Consider two students $I=\{1,2\}$ and two schools $a$ and $b$ with capacities two and one, respectively.
Thus, $\cF_a=2^I$ and $\cF_b=\{X\subseteq I\mid |X|\le1\}$.
Both students initially attend $a$, giving $\omega=(a,a)$ and leaving $b$ initially empty.
Their preferences are $a\sim_1 b\succ_1\varnothing$ and $b\succ_2 a\succ_2\varnothing$.
The common priority order is $1,2$.

Initially, both students point to student 1.
Student 1 points to herself because her initial object $(1,a)$ is in her top class.
Student 2 points to student 1 because replacing $(1,a)$ in the initial base by $(2,b)$ gives the base $\{(2,a),(2,b)\}$.
This base is used only to test the exchange; it is not an allocation.
Processing student 1's self-cycle fixes $T_1=\{(1,a),(1,b)\}$ without selecting either school.

Student 2 then points to herself.
Replacing her initial object $(2,a)$ by $(2,b)$ is feasible in the residual base family because $(2,b)$ can be combined with $(1,a)\in T_1$ to form a base.
The mechanism fixes $T_2=\{(2,b)\}$.
Since $b$ has capacity one, the unique feasible allocation consistent with both fixed classes is $\mu=(a,b)$.
Thus, TCTC assigns student 2 to the initially empty school $b$, changing the school sizes from $(2,0)$ to $(1,1)$ (\Cref{fig:tctc-example}).

Fixing student 1's class does not commit her to either school.
Assigning her to $b$ at that stage would leave school $a$ as student 2's best remaining option, yielding $(b,a)$.
The TCTC outcome $(a,b)$ Pareto dominates this allocation: student 1 is indifferent and student 2 strictly prefers $b$.

\begin{figure}[htbp]
\centering
\definecolor{cycleblue}{HTML}{0072B2}
\definecolor{fixedteal}{HTML}{007F6D}
\begin{tikzpicture}[
  font=\small,
  student/.style={circle,draw=cycleblue!75!black,fill=white,line width=0.6pt,minimum size=7mm,inner sep=1pt},
  processed/.style={student,dashed,draw=fixedteal,text=fixedteal,fill=fixedteal!7},
  classframe/.style={draw=fixedteal!60,rounded corners=4pt,line width=0.5pt},
  exchange/.style={-{Stealth[length=2.4mm]},draw=cycleblue,line width=0.9pt},
  school/.style={draw=black!35,fill=black!2,rounded corners=3pt,line width=0.5pt,minimum width=17mm,minimum height=24mm},
  heading/.style={font=\small\bfseries}
]
\node[heading] at (1,2.05) {Initial endowment};
\node[school] at (0,0) {};
\node[school] at (2,0) {};
\node at (0,1.48) {School $a$};
\node at (2,1.48) {School $b$};
\node[student] (s1) at (0,0.55) {$1$};
\node[student] (s2) at (0,-0.55) {$2$};
\draw[-{Stealth[length=2mm]},draw=black!50,line width=0.6pt] (s2.north) -- (s1.south);
\draw[exchange] (s1.south west) .. controls (-1.05,-0.08) and (-1.05,1.18) .. (s1.north west);
\node at (1,-1.58) {School sizes: $(2,0)$};
\draw[-{Stealth[length=2.4mm]},draw=black!60,line width=0.8pt]
  (3.15,0) -- node[above,text=black!75] {Fix $T_1$} (4.15,0);

\node[heading] at (6.3,2.05) {After fixing $T_1$};
\node[school] at (5.3,0) {};
\node[school] at (7.3,0) {};
\node at (5.3,1.48) {School $a$};
\node at (7.3,1.48) {School $b$};
\draw[classframe] (4.85,0.10) rectangle (7.75,1.00);
\node[processed] at (5.3,0.55) {$1$};
\node[processed] at (7.3,0.55) {$1$};
\node[student] (t2) at (5.3,-0.55) {$2$};
\draw[exchange] (t2.south west) .. controls (4.25,-1.18) and (4.25,0.08) .. (t2.north west);
\node[text=fixedteal,font=\footnotesize] at (6.3,-1.58) {Dashed circles: fixed class};
\draw[-{Stealth[length=2.4mm]},draw=black!60,line width=0.8pt]
  (8.50,0) -- node[above,align=center,text=black!75] {Fix $T_2$\\and allocate} (9.80,0);

\node[heading] at (12,2.05) {Final allocation};
\node[school] at (11,0) {};
\node[school] at (13,0) {};
\node at (11,1.48) {School $a$};
\node at (13,1.48) {School $b$};
\node[student] at (11,0.55) {$1$};
\node[student] at (13,-0.55) {$2$};
\node at (12,-1.58) {School sizes: $(1,1)$};
\end{tikzpicture}
\caption{TCTC assigns a student to an initially empty school.}
\label{fig:tctc-example}
\end{figure}
\end{example}
\FloatBarrier

\section{Main results}\label{sec:analysis}

Our main result is the following theorem for base markets.

\begin{theorem}\label{thm:base-market}
In every base market, TCTC is individually rational, Pareto efficient, and weakly group-strategy-proof on the full domain of weak preferences.
Given a membership oracle for $\B$, TCTC runs in time polynomial in $n$ and $|E|$.
\end{theorem}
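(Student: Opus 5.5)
We will prove the four claims in order: well-definedness and feasibility, then individual rationality, then Pareto efficiency, then weak group-strategy-proofness. Running time is handled last.

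\textbf{Invariant and feasibility.} The first step is to show that processing a cycle $C$ preserves the invariant $\omega(\bar F)\in\tilde{\B}$. For each $i\in C$ there is an $a_i\in T_i$ with $\omega(\bar F)-\omega(\tau(i))+a_i\in\tilde\B$. We must show that $\omega(\bar F)\setminus\omega(C)\cup\{a_i\}_{i\in C}$ lies in $\tilde\B$ for a suitable choice of representatives. This is the standard simultaneous-exchange problem in TTC-type matroid mechanisms. We plan to use the priority rule in \eqref{eq:target}: $\tau(i)$ is the \emph{minimum}-index agent whose endowment can be replaced. This means $a_i$ does not lie in the span of $\omega(\bar F)$ minus the endowments of agents of index below $\tau(i)$, apart from $\omega(\tau(i))$ itself. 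Equivalently, each $a_i$'s fundamental circuit with respect to $\omega(\bar F)$ contains $\omega(\tau(i))$ and no endowment of higher priority.

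Ordering the cycle by target priority gives a triangular exchange matrix. The unique-matching lemma for matroid exchanges (as in Krogdahl, or the TTC-M analyses) then yields a base. By \Cref{lem:residual}, these exchanges may be carried out inside $\tilde\B$.

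Once the invariant holds, $\tilde\B$ remains nonempty throughout. At termination $F=I$, and $\tilde\B=\{\emptyset\}$ is nonempty, so some base of $\B$ meets every class $T_i$. Because $T_i\subseteq E_i$ and $|B|=n$, this base is a feasible allocation.

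\textbf{Individual rationality.} By the invariant, $\omega(i)$ is a nonloop of $\tilde\B$ whenever $i$ is processed. Hence $T_i$ is at least as good as $\omega(i)$, and $\Phi_i\in T_i$.

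\textbf{Pareto efficiency.} Take a feasible $\mu$ that weakly dominates $\Phi$. We argue by induction over the processed cycles $C_1,C_2,\dots$ that $\mu(i)\in T_i$ for every agent.

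Suppose this holds for the agents in $C_1\cup\dots\cup C_{k-1}$. Then $\mu$ restricted to the remaining agents lies in the residual family at step $k$. So every $\mu(i)$ is a nonloop there, and each $\mu(i)$ is weakly below $T_i$. Since $\mu(i)\succsim_i\Phi_i\in T_i$, we get $\mu(i)\in T_i$. Consequently nobody strictly improves, so no Pareto improvement exists.

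\textbf{Weak group-strategy-proofness.} Suppose a coalition $Q$ misreports and every member strictly improves. Let $i\in Q$ be the first coalition member processed under truthful reports, say in cycle $C_k$. We will show that the truthful run and the manipulated run share the processed cycles before $C_k$. This requires two facts: the outcome is independent of cycle selection, and a cycle whose agents' reports are unchanged remains a cycle after the misreports, since the pointing graph depends only on top classes among nonloops.

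Granting this, agent $i$ already receives a member of her top class among the nonloops of the residual family. Under the misreport, $i$ must obtain an object that is a nonloop of the same residual family, because the earlier fixed classes are unchanged. Such an object cannot be strictly better than $T_i$, which is a contradiction.

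\textbf{Expected obstacle.} The main difficulty is independence of cycle selection. We would first prove that a cycle present in the pointing graph stays a cycle after another disjoint cycle is processed. For this we would show that fixing a class only shrinks the residual family consistently with the priority-minimal exchanges, so that the top classes and targets of untouched agents persist. A confluence (diamond) argument then gives independence.

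\textbf{Running time.} Each iteration removes at least one agent. Membership in $\tilde\B$ is tested by a matroid intersection, namely a partition constraint selecting one element per fixed class intersected with $\B$ restricted to completions. This is polynomial with the membership oracle. The final $\triangleleft$-minimum base is found greedily under the same intersection, using weights induced by $\sqsubset$.
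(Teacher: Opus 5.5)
Your outline follows the paper's own proof. The cycle exchange is the first assertion of \Cref{lem:priority}; you use Krogdahl's unique-matching lemma where the paper uses the chain of flats $\cl_{\tilde\B}(S_\ell)$, and both work. Your IR, PE and WGSP arguments are those of \Cref{lem:domination} and \Cref{prop:irpe,prop:wgsp}.

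\textbf{The gap.} The gap is the step you defer as the ``expected obstacle'': persistence of an arrow $i\to j$ when a disjoint cycle $C$ is processed (\Cref{lem:persistence}). The rest depends on it. Without order independence $\Phi$ is not well defined, and your WGSP argument needs the manipulated outcome to come from a run that starts with $C_1,\dots,C_{k-1}$.

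Half of persistence is within your tools. New nonloops are old nonloops. Adding $x_i$, with $\pi(x_i)=j\notin C$, to your triangular exchange keeps $T_i$ available with target $j$. The other half is showing that $i$ cannot acquire a target $h$ of higher priority than $j$. ``Fixing a class only shrinks the residual family'' does not give this, because replacement options are not monotone. Take five agents, let $\B$ be the linear matroid in which $\omega(k)$ is the unit vector $u_k\in\mathbb{R}^5$, and let $T_1=\{a\}$ with $a=u_3+u_4$, so $\tau(1)=3$. Let $T_4=\{x_4\}$ with $x_4=u_4+u_5$, so $C=\{4\}$ is a self-cycle. After processing $C$, the object $a$ can replace $\omega(5)$, which it could not do before. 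Also, with nonsingleton classes, membership in the new family is witnessed by arbitrary $z_k\in T_k$, not by the $x_k$ that realise the arrows. So no single contraction describes the update.

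\textbf{The missing idea.} It is the counting bound in the second assertion of \Cref{lem:priority}: every $Z\in\tilde\B$ has at least $\ell$ elements $z$ with $\pi(z)$ among the $\ell$ highest-priority remaining agents. Suppose $h<j$ became available. Then $(\omega(\bar F\setminus C)-\omega(h)+a)\cup\{z_k\mid k\in C\}$ would be a base of $\tilde\B$ for some $a\in T_i$ and $z_k\in T_k$. Put $P=\{k\in\bar F\mid k\le h\}$ and count elements with $\pi$-value in $P$:
\begin{itemize}
\item the endowments contribute $|P|-|C\cap P|-1$;
\item $a$ contributes none, since $\pi(a)\ge j>h$;
\item the $z_k$ contribute at most $|C\cap P|$, because $\pi(z_k)\ge\tau(k)$ for every nonloop of $T_k$ and $\tau$ permutes $C$.
\end{itemize}
The total is $|P|-1$, which contradicts the bound. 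You need this argument, or an equivalent one that controls the $\pi$-values of every completion and not just one chosen exchange, before your diamond argument goes through.
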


We prove order independence of TCTC's final allocation in \Cref{prop:order}.
We establish IR and PE in \Cref{prop:irpe}, WGSP in \Cref{prop:wgsp}, and polynomial-time computability in \Cref{prop:complexity}.

On the strict-preference domain, TCTC can be viewed as the TTC-M mechanism of \citet{SuzukiEtAl2023}.
To see this, regard each object as a school, with feasible occupancy vectors given by the incidence vectors of the bases in $\B$.
These vectors form an M-convex set, since the M-convex subsets of $\{0,1\}^E$ are exactly the sets of incidence vectors of matroid base families on $E$.
Extend each agent $i$'s strict preference to $E$ by placing $E\setminus E_i$ below $E_i$ in a fixed order.
Use the same common agent priority as TCTC, giving each school's initial holder, if any, first priority.

Under strict preferences, fixed classes are singletons, and \eqref{eq:residual} gives exactly the admissibility tests of TTC-M with past assignments fixed.
Each agent's initial school remains available until she is processed, and hence she always points to a school in $E_i$.
Each school points to the highest-priority admissible agent, and suppressing the school vertices gives the pointing graph of TCTC.
Thus, the mechanisms produce the same allocation, and the strong group-strategy-proofness of TTC-M \citep[Theorem~6]{SuzukiEtAl2023} gives the following corollary.

\begin{corollary}\label{cor:strict-gsp}
In every base market, TCTC is strongly group-strategy-proof when both preferences and admissible reports are restricted to be strict.
\end{corollary}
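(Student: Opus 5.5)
The corollary follows by a reduction: on strict profiles, TCTC and the TTC-M mechanism of \citet{SuzukiEtAl2023} produce the same allocation, and TTC-M is strongly group-strategy-proof by \citep[Theorem~6]{SuzukiEtAl2023}.

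\textbf{Step 1: the virtual market.} Build the school market sketched above from the base market.
\begin{itemize}
\item The schools are the objects of $E$.
\item The feasible occupancy vectors are the incidence vectors of $\B$. They form an M-convex set, because $\{0,1\}$-valued M-convex sets are exactly matroid base families.
\item Each agent $i$'s strict preference on $E_i$ is extended to $E$ by appending $E\setminus E_i$ below $E_i$ in a fixed, report-independent order.
\item Priorities use TCTC's index order, except that the initial holder of each object is moved to the top.
\end{itemize}
This construction is a function of the reports on $E_i$ only. Consequently, any joint strict misreport $\succ'_Q$ in the base market is a joint strict misreport in the virtual market. Agents' welfare in the two markets also agrees whenever allocations lie in $\prod_i E_i$.

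\textbf{Step 2: the two runs coincide (main step).} I would prove by induction on iterations that the two mechanisms, run on a strict profile with corresponding cycle selections, fix the same agents with the same objects. Let $F$ be the processed agents and $(y_f)$ their singleton classes.

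\emph{Admissibility matches.} The residual family \eqref{eq:residual} is $\{A\subseteq E_{\bar F}: |A|=|\bar F|,\ A\cup\{y_f\}\in\B\}$. This is exactly the contraction that TTC-M uses when it tests whether a school can admit another agent given past fixed assignments.

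\emph{Agents point within $E_i$.} Agent $i$'s initial object stays a nonloop of $\tilde\B$ until $i$ is processed, since $\omega(\bar F)\in\tilde\B$. Her top admissible school is therefore her top class $T_i=\{a\}\subseteq E_i$, and she never points to objects outside $E_i$.

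\emph{Schools point like TCTC.} School $a$ points to the highest-priority agent $j$ whose seat can be exchanged, i.e., with $\omega(\bar F)-\omega(j)+a\in\tilde\B$. Giving the initial holder of $a$ top priority makes $a$ point to that holder when $a=\omega(i)$; in that case the exchange condition holds trivially for $j=i$. Otherwise school $a$ points to the $\tau(i)$ of \eqref{eq:target}.

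\emph{Cycles match.} Contracting the school vertices of the TTC-M graph therefore gives TCTC's pointing graph. Cycles correspond bijectively, and processing a cycle fixes identical objects in both mechanisms.

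The hard part here is reconciling the exact TTC-M admissibility and priority conventions with \eqref{eq:target}. In particular, I need to check that TTC-M's exchange test is the same single-element basis exchange against $\omega(\bar F)$, and that after contraction the "initial holder first" rule does not alter any other school's pointer. I would verify these directly against the definitions in \citet{SuzukiEtAl2023}.

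\textbf{Step 3: order independence and final selection.} With singleton classes, the final $\triangleleft$-minimum base containing each $y_i$ is just $\{y_i\}$, so the final selection step of TCTC is vacuous. Order independence of TCTC (\Cref{prop:order}) then lets us match TTC-M's cycle choices without loss.

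\textbf{Step 4: conclude.} Suppose a coalition $Q$ had a strict joint misreport making every member weakly better off and some member strictly better off under TCTC. Then the same misreport has the same effect under TTC-M in the virtual market, since the outcomes coincide by Step 2 and welfare agrees by Step 1. This contradicts \citep[Theorem~6]{SuzukiEtAl2023}.
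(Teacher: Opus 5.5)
Your proposal is correct and follows essentially the same reduction as the paper. Both treat each object as a school with the incidence vectors of $\B$ as the M-convex constraint, extend each agent's preference by placing $E\setminus E_i$ below $E_i$, and give each school's initial holder top priority. Both then note that \eqref{eq:residual} reproduces TTC-M's admissibility tests and that suppressing school vertices yields TCTC's pointing graph, and invoke the strong group-strategy-proofness of TTC-M. Your extra remarks fill in details the paper leaves implicit: with singleton classes the final $\triangleleft$-selection is vacuous, and base-market misreports embed as virtual-market misreports with unchanged welfare.
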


The class of g-matroid school constraints is maximal in the following sense.
\Citet[Theorem~5]{ImamuraKawase2025} fix a student set $I$, a school set $S$ with $|S|\ge3$, and a school $s^*$ whose nonempty constraint family $\cF_{s^*}$ is not a g-matroid.
They show that there is a feasible initial endowment such that no mechanism satisfies IR, PE, and SP on the strict-preference domain when every other school has a unit-capacity constraint.
The strict-preference domain is contained in the full domain of weak preferences, and WGSP implies SP.
Hence, adding any such family to the class of permissible school constraints rules out a uniform guarantee of IR, PE, and SP, and \Cref{thm:base-market} gives the following characterization.

\begin{corollary}\label{cor:maximality}
Within the class of school-by-school constraints of \Cref{subsec:model}, the class of g-matroids is maximal among classes of nonempty school constraint families for which every market with a feasible initial endowment admits a mechanism satisfying IR, PE, and SP on the full domain of weak preferences.
\end{corollary}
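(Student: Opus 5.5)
The corollary is a maximality statement, so the plan splits it into two parts: a positive part and a negative part. Together they show that the class of g-matroids has the required property and that no strictly larger class of nonempty school constraint families does.

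\emph{Positive part.} Suppose every school constraint $\cF_s$ is a g-matroid and the initial endowment $\omega$ is feasible. We invoke the reduction of \Cref{subsec:model}:
\begin{itemize}
  \item each $\cF_s$ is viewed as a g-matroid on the block $I\times\{s\}$;
  \item the trivial g-matroid $\{\emptyset,\{(i,\varnothing)\}\}$ is placed on each singleton $\{(i,\varnothing)\}$;
  \item we take the direct sum of these g-matroids and truncate it to cardinality $|I|$.
\end{itemize}
This produces a matroid base family $\B$ on $E=I\times(S\cup\{\varnothing\})$. The truncation is nonempty because $\omega\in\B$. A weak preference of student $i$ over $S\cup\{\varnothing\}$ is exactly a weak preference on $E_i$, and feasibility of an allocation coincides with membership in $\B$. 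So the student--school market is a base market. \Cref{thm:base-market} then shows that TCTC is IR, PE, and WGSP on the full domain of weak preferences. Since WGSP implies SP, TCTC is the required mechanism for every such market.

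\emph{Negative part.} Let $\mathcal{C}$ be any class of nonempty school constraint families that strictly contains the g-matroids. Pick a family $\cF_{s^*}\in\mathcal{C}$ that is not a g-matroid. Every unit-capacity family $\{X\subseteq I\mid |X|\le 1\}$ is a g-matroid (it is the independent-set family of a uniform matroid), so it also lies in $\mathcal{C}$. We then fix a student set $I$ and a school set $S\ni s^*$ with $|S|\ge 3$, assign $\cF_{s^*}$ to $s^*$, and give every other school unit capacity. By \citet[Theorem~5]{ImamuraKawase2025} there is a feasible initial endowment for which no mechanism is IR, PE, and SP on the strict-preference domain.

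It remains to transfer this impossibility to the full weak domain. Suppose some mechanism $\Phi$ were IR, PE, and SP on the full domain of weak preferences. Restrict $\Phi$ to strict profiles.
\begin{itemize}
  \item IR and PE are conditions on each profile separately, so they survive the restriction.
  \item SP on the full domain forbids every misreport, including strict ones. Hence the restriction is SP on the strict domain, where the admissible reports are strict.
\end{itemize}
This contradicts the cited theorem, so $\mathcal{C}$ lacks the property.

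The main point requiring care is checking the domain convention in the negative part. \citet{ImamuraKawase2025} restrict reports to strict preferences. The restriction argument above is valid only because SP on the full domain quantifies over a superset of the misreports they consider, so it implies their strict-domain SP condition. The remaining details concern matching the market format: the cited theorem's unmatched option corresponds to the object $(i,\varnothing)$, and the quantifier over $I$ and $S$ should be read as existential, so that producing one such market suffices.
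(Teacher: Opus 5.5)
Your proposal is correct and follows the paper's argument. The positive direction embeds g-matroid school constraints into a base market and applies \Cref{thm:base-market}, using that WGSP implies SP. The negative direction transfers \citet[Theorem~5]{ImamuraKawase2025} from the strict domain to the full weak domain by restricting the mechanism to strict profiles and reports.
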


In \Cref{subsec:preparation}, we establish two properties of matroid exchanges and use them to prove feasibility of cycle processing and order independence.
\Cref{subsec:welfare} proves IR, PE, and WGSP from a single domination lemma, and \Cref{sec:computation} gives the polynomial-time implementation.

\subsection{Structure of cycle processing}\label{subsec:preparation}

We use the following standard notions for a matroid base family $\B$ on a ground set $U$.
For $A\subseteq U$, the rank $r_\B(A)$ is the largest number of elements in an independent subset of $A$.
The closure of $A$ is $\cl_\B(A)=\{e\in U\mid r_\B(A+e)=r_\B(A)\}$.
We call $\cl_\B(A)$ the flat spanned by $A$.
An element is a loop if and only if it belongs to $\cl_\B(\emptyset)$.
A circuit is a minimal set that is not independent.
If $B$ is a base and $e\notin B$ is a nonloop, then $B+e$ contains a unique circuit $C_B(e)$, the fundamental circuit of $e$ with respect to $B$, and for $b\in B$ the set $B-b+e$ is a base if and only if $b\in C_B(e)-e$.

Fix a state with processed set $F$ and residual base family $\tilde{\B}$, and let $r=|\bar F|$.
We assume that $\omega(\bar F)\in\tilde{\B}$, which holds initially and is preserved by cycle processing, as shown in \Cref{prop:cyclefeasible} below.
For each $i\in\bar{F}$, let $T_i$ and $\tau(i)$ denote the top class and the pointing target computed at this state.
For a nonloop $a$ of $\tilde{\B}$, let $\pi(a)$ be the smallest index $j\in\bar F$ such that $\omega(\bar F)-\omega(j)+a\in\tilde{\B}$.
We now show that such an index exists.
If $a\notin \omega(\bar F)$, the initial endowments that $a$ can replace are the elements of the fundamental circuit $C_{\omega(\bar F)}(a)-a$, which is nonempty because $a$ is a nonloop.
If $a=\omega(j')\in \omega(\bar F)$, then $\omega(\bar F)-\omega(j)+a\in\tilde\B$ only for $j=j'$, and $\pi(a)=j'$.
Thus, $\omega(\pi(a))$ is the highest-priority initial endowment that $a$ can replace in $\omega(\bar F)$, and \eqref{eq:target} can be rewritten as
\begin{align}
  \tau(i)=\min\{\pi(a)\mid a\in T_i\text{ is a nonloop of }\tilde{\B}\}
  \qquad(i\in\bar F).
  \label{eq:tau-pi}
\end{align}
Let $\sigma(1),\ldots,\sigma(r)$ be the remaining agents in priority order.
The following lemma is used to establish feasibility of cycle processing and order independence.

\begin{lemma}\label{lem:priority}
If $Y\subseteq E_{\bar F}$ is a set of nonloops of $\tilde\B$ and the restriction $\pi|_Y\colon Y\to\bar F$ is injective, then $(\omega(\bar F)\setminus\{\omega(\pi(y))\mid y\in Y\})\cup Y\in\tilde{\B}$.
Moreover, for every $Z\in\tilde\B$ and $\ell\in\{0,1,\ldots,r\}$, at least $\ell$ elements $z\in Z$ satisfy $\pi(z)\in\{\sigma(1),\ldots,\sigma(\ell)\}$.
\end{lemma}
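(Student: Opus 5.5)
The plan is to reinterpret $\pi$ through a chain of flats of $\tilde\B$ and derive both assertions from it. Write $B\coloneqq\omega(\bar F)$, which is a base of $\tilde\B$ by the standing assumption; $\tilde\B$ is a matroid base family by \Cref{lem:residual}. For $\ell\in\{0,\ldots,r\}$ put $B_\ell\coloneqq\{\omega(\sigma(1)),\ldots,\omega(\sigma(\ell))\}$ and $L_\ell\coloneqq\cl_{\tilde\B}(B\setminus B_\ell)$. Since $B\setminus B_\ell$ is independent, $L_\ell$ is a flat of rank $r-\ell$. These flats form a chain $L_0\supseteq L_1\supseteq\cdots\supseteq L_r=\cl_{\tilde\B}(\emptyset)$.

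\emph{Key characterization.} For a nonloop $a$ of $\tilde\B$ and each $\ell$, I claim that $\pi(a)\in\{\sigma(1),\ldots,\sigma(\ell)\}$ if and only if $a\notin L_\ell$. Consider first $a\notin B$. By the fundamental-circuit description recalled above, $\pi(a)$ is the highest-priority agent $j$ with $\omega(j)\in C_B(a)-a$. Moreover, $C_B(a)-a$ meets $B_\ell$ exactly when $a$ is not spanned by $B\setminus B_\ell$, since otherwise $C_B(a)\subseteq (B\setminus B_\ell)+a$. The case $a=\omega(j')\in B$ is immediate. Equivalently, $\pi(a)=\sigma(k)$ if and only if $a\in L_{k-1}\setminus L_k$.

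\emph{Second assertion.} Let $Z\in\tilde\B$. Since $Z$ is independent, $|Z\cap L_\ell|\le r(L_\ell)=r-\ell$. Hence at least $\ell$ elements of $Z$ lie outside $L_\ell$, and by the characterization these are exactly the $z$ with $\pi(z)\in\{\sigma(1),\ldots,\sigma(\ell)\}$.

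\emph{First assertion.} I would argue by induction on $|Y|$, with the statement quantified over all bases $B$ of $\tilde\B$. Here $\pi$ is defined relative to $B$, with the element at position $\sigma(k)$ playing the role of $\omega(\sigma(k))$. Pick $y\in Y$ with $\pi(y)=\sigma(k)$, where $k$ is smallest among $\pi(Y)$. Then $B^1\coloneqq B-\omega(\sigma(k))+y$ is a base by definition of $\pi$. Give $y$ the position $\sigma(k)$ in $B^1$, and define $B^1_\ell$, $L^1_\ell$ accordingly. I would show $L^1_\ell=L_\ell$ for all $\ell$:
\begin{itemize}[leftmargin=2.5em]
  \item For $\ell\ge k$, we have $B^1\setminus B^1_\ell=B\setminus B_\ell$, so the flats coincide.
  \item For $\ell<k$, we have $B^1\setminus B^1_\ell=(B\setminus B_\ell)-\omega(\sigma(k))+y$. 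Here $y\in L_{k-1}\subseteq L_\ell$, and $y\notin L_k$. Thus $y\notin\cl((B\setminus B_\ell)-\omega(\sigma(k)))$, because that set is contained in $L_k$ once $\ell<k$. So this exchange yields an independent set of the same rank inside $L_\ell$, and it spans the same flat.
\end{itemize}
By the characterization, $\pi$ computed with respect to $B^1$ agrees with $\pi$ on every nonloop. In particular, the remaining elements $Y-y$ still have injective $\pi$-values, all different from $\sigma(k)$. Applying the induction hypothesis to $B^1$ and $Y-y$ replaces exactly the elements $\omega(\pi(y'))$ for $y'\in Y-y$; none of these is $y$. This gives $(B\setminus\{\omega(\pi(y'))\mid y'\in Y\})\cup Y\in\tilde\B$.

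One side condition needs checking: elements of $Y-y$ must not already lie in $B^1$. This holds since $y'\ne y$, and an element $y'\in B$ has $\pi(y')$ equal to its own agent, which is not $\sigma(k)$ by injectivity.

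\emph{Main obstacle.} I expect the delicate step to be the invariance of the flat chain under a single exchange, that is, the case $\ell<k$. I would verify it carefully via the rank identity $r(L_\ell)=r-\ell$ together with $y\in L_\ell\setminus\cl\bigl((B\setminus B_\ell)-\omega(\sigma(k))\bigr)$.
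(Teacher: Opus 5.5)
Your proposal is correct. The characterization of $\pi$ and the proof of the second assertion are the paper's (your $L_\ell$ is the paper's $\cl_{\tilde\B}(S_{\ell+1})$), but you prove the first assertion by a genuinely different route.

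One sentence needs repair. In the case $\ell<k$ you justify $y\notin\cl_{\tilde\B}((B\setminus B_\ell)-\omega(\sigma(k)))$ by saying that this set lies in $L_k$. That is false whenever $\ell\le k-2$: the set contains $\omega(\sigma(\ell+1)),\ldots,\omega(\sigma(k-1))$, and the independence of $B$ keeps these out of $L_k$. The step itself holds and is easier than you make it:
\begin{itemize}[leftmargin=2.5em]
\item $((B\setminus B_\ell)-\omega(\sigma(k)))+y$ is a subset of the base $B^1$, hence independent.
\item It has $r-\ell$ elements.
\item It lies in $L_\ell$, since $y\in L_{k-1}\subseteq L_\ell$.
\end{itemize}
So it spans the rank-$(r-\ell)$ flat $L_\ell$. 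Alternatively, for $y\notin B$, the condition $y\in L_{k-1}\setminus L_k$ says exactly that $\omega(\sigma(k))\in C_B(y)$.

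This argument never uses the minimality of $k$. Your side-condition paragraph only needs $\omega(\sigma(k))\notin Y-y$, which injectivity gives. Elements of $Y-y$ may lie in $B^1$ at their own positions without harm.

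The paper does not induct. It takes $y_\ell$ to be the element of $Y$ with $\pi(y_\ell)=\sigma(\ell)$, or $\omega(\sigma(\ell))$ if there is none. It notes that $y_\ell$ lies in the $\ell$th layer of the flag, which in your notation is $y_\ell\in L_{\ell-1}\setminus L_\ell$. It then concludes that $\{y_1,\ldots,y_r\}$ is a base, because adding $y_r,\ldots,y_1$ raises the rank at every step. This one-shot criterion (one representative per layer of the flag) is shorter. It needs neither the strengthening to arbitrarily labeled bases nor recomputing $\pi$ after each swap.

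Your sequential-exchange proof pays for that bookkeeping but gives an extra structural fact. The map $\pi$ depends only on the flag $L_0\supseteq\cdots\supseteq L_r$. Hence it is unchanged when a base is modified by an exchange $B\mapsto B-\omega(\pi(y))+y$ in which $y$ inherits the position $\pi(y)$.
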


\begin{proof}
For each $\ell\in\{1,\ldots,r\}$, define $S_\ell=\{\omega(\sigma(\ell)),\ldots,\omega(\sigma(r))\}$.
Also, let $S_{r+1}=\emptyset$.
Each $S_\ell$ is independent and has rank $r-\ell+1$.
We first show that, for every nonloop $a$ and every $\ell\in\{1,\ldots,r\}$,
\begin{equation}
  \pi(a)=\sigma(\ell)
  \quad\Longleftrightarrow\quad
  a\in\cl_{\tilde\B}(S_\ell)\setminus\cl_{\tilde\B}(S_{\ell+1}).
  \label{eq:priority-closure}
\end{equation}
If $a\notin \omega(\bar F)$, then $a$ lies in the closure of a subset $T$ of $\omega(\bar F)$ if and only if $C_{\omega(\bar F)}(a)-a\subseteq T$.
Hence, the highest-priority element of $C_{\omega(\bar F)}(a)-a$ is $\omega(\sigma(\ell))$ if and only if $S_\ell$ spans $a$ and $S_{\ell+1}$ does not.
If $a=\omega(\sigma(\ell))\in \omega(\bar F)$, then $\pi(a)=\sigma(\ell)$, and the independence of $\omega(\bar F)$ gives $a\in\cl_{\tilde\B}(S_\ell)\setminus\cl_{\tilde\B}(S_{\ell+1})$.

For the first assertion, let $y_\ell$ be the element of $Y$ with $\pi(y_\ell)=\sigma(\ell)$ if it exists, and $y_\ell=\omega(\sigma(\ell))$ otherwise.
Then, we have $(\omega(\bar F)\setminus\{\omega(\pi(y))\mid y\in Y\})\cup Y=\{y_1,\ldots,y_r\}$.
Moreover, $y_\ell\in\cl_{\tilde\B}(S_\ell)\setminus\cl_{\tilde\B}(S_{\ell+1})$ for every $\ell$.
For each $\ell$, all of $y_{\ell+1},\ldots,y_r$ lie in $\cl_{\tilde\B}(S_{\ell+1})$, whereas $y_\ell$ does not.
Thus, adding $y_r,\ldots,y_1$ in this order increases the rank at every step, and the resulting set of $r$ elements is a base of $\tilde\B$.

For the second assertion, the closure $\cl_{\tilde\B}(S_{\ell+1})$ has rank $r-\ell$ and hence contains at most $r-\ell$ elements of the independent set $Z$.
Since $|Z|=r$, at least $\ell$ elements of $Z$ lie outside this closure.
By \eqref{eq:priority-closure}, these are exactly the elements $z\in Z$ with $\pi(z)\in\{\sigma(1),\ldots,\sigma(\ell)\}$.
\end{proof}

We now use the first assertion of \Cref{lem:priority} to show that a selected cycle can be processed and that the updated residual base family has the properties needed in later iterations.

\begin{proposition}\label{prop:cyclefeasible}
Let $C$ be a selected cycle, and let $\tilde{\B}'$ be the residual base family defined by \eqref{eq:residual} for the processed set $F\cup C$.
Then $\omega(\bar F\setminus C)$ belongs to $\tilde{\B}'$.
The family $\tilde{\B}'$ is a rank-$|\bar{F}\setminus C|$ matroid base family, and each of its nonloops is a nonloop of $\tilde{\B}$.
\end{proposition}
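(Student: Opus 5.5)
For each agent $i\in C$, choose a witness object $a_i\in T_i$ that is a nonloop of $\tilde{\B}$ and satisfies $\pi(a_i)=\tau(i)$. Such an object exists by \eqref{eq:tau-pi}, since $\tau(i)$ is the minimum of $\pi(a)$ over nonloops $a\in T_i$. Let $Y=\{a_i\mid i\in C\}$. The objects $a_i$ lie in the pairwise disjoint sets $E_i$, so they are distinct. Because $C$ is a cycle of the pointing graph, the map $i\mapsto\tau(i)$ restricted to $C$ is a bijection $C\to C$. Hence $\pi|_Y$ is injective, and $\{\pi(y)\mid y\in Y\}=C$.

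The first assertion of \Cref{lem:priority} then gives
\[
  A\coloneqq(\omega(\bar F)\setminus\omega(C))\cup Y=\omega(\bar F\setminus C)\cup Y\in\tilde{\B}.
\]
By the definition \eqref{eq:residual} of $\tilde{\B}$, there are $(y_f)_{f\in F}\in\prod_{f\in F}T_f$ such that $A\cup\{y_f\mid f\in F\}\in\B$. We can regroup this base as
\[
  \omega(\bar F\setminus C)\cup\{a_i\mid i\in C\}\cup\{y_f\mid f\in F\}.
\]
Here $(a_i)_{i\in C}$ together with $(y_f)_{f\in F}$ is a selection from the fixed classes of the new processed set $F\cup C$, because $a_i\in T_i$ for each $i\in C$. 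Moreover, $\omega(\bar F\setminus C)$ is a subset of $E_{\bar F\setminus C}$ of size $|\bar F\setminus C|$. So $\omega(\bar F\setminus C)\in\tilde{\B}'$, which proves the first claim.

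Since $\tilde{\B}'$ is nonempty, \Cref{lem:residual} applied to the processed set $F\cup C$ shows that $\tilde{\B}'$ is a rank-$|\bar F\setminus C|$ matroid base family. For the nonloop claim, let $e$ be a nonloop of $\tilde{\B}'$ and pick $A'\in\tilde{\B}'$ with $e\in A'$. By definition there are $t_i\in T_i$ for $i\in C$ and $y_f\in T_f$ for $f\in F$ such that
\[
  A'\cup\{t_i\mid i\in C\}\cup\{y_f\mid f\in F\}\in\B.
\]
Set $A''\coloneqq A'\cup\{t_i\mid i\in C\}$. Then $A''\subseteq E_{\bar F}$ and $|A''|=|\bar F|$, so $A''\in\tilde{\B}$. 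Since $e\in A''$, the object $e$ is a nonloop of $\tilde{\B}$.

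The only delicate step is the choice of witnesses. We must pick $a_i$ attaining the minimum defining $\tau(i)$, not just any element of $T_i$ that can replace $\omega(\tau(i))$. This choice is what makes $\pi|_Y$ injective, which is exactly the hypothesis \Cref{lem:priority} needs. An arbitrary exchange element could have a different $\pi$-value and break injectivity.
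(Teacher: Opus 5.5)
Your proof is correct and follows the paper's argument: you use the same witnesses $a_i$ with $\pi(a_i)=\tau(i)$, then the first assertion of \Cref{lem:priority}, then \Cref{lem:residual}. The only variation is that you apply \Cref{lem:residual} to $\B$ with processed set $F\cup C$ and check the nonloop claim by hand, whereas the paper applies it to $\tilde{\B}$ with processed set $C$ and gets both claims at once.

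Your closing remark overstates the delicacy, though. Any $a\in T_i$ with $\omega(\bar F)-\omega(\tau(i))+a\in\tilde{\B}$ automatically satisfies $\pi(a)=\tau(i)$, since $\pi(a)\le\tau(i)$ by the definition of $\pi$ and $\pi(a)\ge\tau(i)$ by \eqref{eq:tau-pi}.
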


\begin{proof}
For each $k\in C$, \eqref{eq:tau-pi} provides a nonloop $x_k\in T_k$ with $\pi(x_k)=\tau(k)$.
The objects $x_k$ are distinct because the sets $E_k$ are pairwise disjoint.
The values $\pi(x_k)=\tau(k)$ are distinct and $\{\tau(k)\mid k\in C\}=C$ because $C$ is a directed cycle.
The first assertion of \Cref{lem:priority} gives $\omega(\bar F\setminus C)\cup\{x_k\mid k\in C\}\in\tilde{\B}$, and the definition of the residual base family yields $\omega(\bar F\setminus C)\in\tilde{\B}'$.

In particular, $\tilde{\B}'$ is nonempty.
By expanding \eqref{eq:residual}, a set belongs to $\tilde{\B}'$ if and only if it can be combined with one object from each class $T_k$, for $k\in C$, to form a base of $\tilde{\B}$.
Applying \Cref{lem:residual} to $\tilde{\B}$ with processed set $C$ and fixed classes $(T_k)_{k\in C}$ gives the remaining claims.
\end{proof}

The following persistence property is the key to order independence.

\begin{lemma}\label{lem:persistence}
Let $C$ be a selected cycle, and suppose that $i\notin C$ points to $j=\tau(i)\notin C$.
After the mechanism processes $C$, agent $i$ has the same top class $T_i$ and still points to $j$.
\end{lemma}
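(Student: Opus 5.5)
The plan is to check the two halves of the claim separately: that $T_i$ is still agent $i$'s top class, and that $i$ still points to $j$. Throughout, $\tilde\B'$ denotes the residual base family after processing $C$, and $\tau'$ the new pointing map. By \Cref{prop:cyclefeasible}, $\tilde\B'$ is a matroid base family, $\omega(\bar F\setminus C)\in\tilde\B'$, and every nonloop of $\tilde\B'$ is a nonloop of $\tilde\B$.

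\emph{Top class.} Every indifference class of $i$ strictly above $T_i$ contains no nonloop of $\tilde\B$. By the nonloop inclusion, it therefore contains no nonloop of $\tilde\B'$. So it suffices to exhibit a nonloop of $\tilde\B'$ in $T_i$, and this will come out of the existence step below.

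\emph{Existence of the arrow to $j$.} By \eqref{eq:tau-pi}, pick a nonloop $a\in T_i$ of $\tilde\B$ with $\pi(a)=j$. For each $k\in C$, pick $x_k\in T_k$ with $\pi(x_k)=\tau(k)$, as in the proof of \Cref{prop:cyclefeasible}. Put $Y=\{x_k\mid k\in C\}\cup\{a\}$; its elements are distinct because they lie in disjoint sets $E_k$ and $E_i$. The values $\pi(x_k)$ run bijectively over $C$, and $\pi(a)=j\notin C$, so $\pi|_Y$ is injective. The first assertion of \Cref{lem:priority} then gives
\[
\bigl(\omega(\bar F\setminus C)-\omega(j)+a\bigr)\cup\{x_k\mid k\in C\}\in\tilde\B .
\]
By \eqref{eq:residual}, this means $\omega(\bar F\setminus C)-\omega(j)+a\in\tilde\B'$. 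In particular $a$ is a nonloop of $\tilde\B'$, so $T_i$ is still the top class, and $\tau'(i)\le j$.

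\emph{Minimality; this is the main step.} Suppose, for contradiction, that some $j'\in\bar F\setminus C$ with higher priority than $j$ and some $a'\in T_i$ satisfy $Z'\coloneqq\omega(\bar F\setminus C)-\omega(j')+a'\in\tilde\B'$. Then $Z\coloneqq Z'\cup\{y_k\mid k\in C\}\in\tilde\B$ for some $y_k\in T_k$. All elements of $Z$ are nonloops of $\tilde\B$, and by \eqref{eq:tau-pi} we have $\pi(a')\ge\tau(i)=j$ and $\pi(y_k)\ge\tau(k)$ in priority order.

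Let $\ell$ be the position of $j'$, so $j'=\sigma(\ell)$, and count the elements $z\in Z$ with $\pi(z)\in\{\sigma(1),\ldots,\sigma(\ell)\}$:
\begin{itemize}
\item $a'$ does not count, since $\pi(a')\ge j$ and $j$ has lower priority than $j'$.
\item Among the endowments $\omega(m)$ with $m\in\bar F\setminus C$ and $m\ne j'$, where $\pi(\omega(m))=m$, exactly $\ell-1-|C\cap\{\sigma(1),\ldots,\sigma(\ell-1)\}|$ count.
\item Among the $y_k$, at most $|\{k\in C\mid\tau(k)\text{ precedes }j'\}|$ count. Since $\tau$ is a bijection of $C$ onto itself and $j'\notin C$, this number equals $|C\cap\{\sigma(1),\ldots,\sigma(\ell-1)\}|$.
\end{itemize}
The total is at most $\ell-1$. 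This contradicts the second assertion of \Cref{lem:priority}, which guarantees at least $\ell$ such elements. Hence $\tau'(i)=j$.

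The delicate point is exactly this counting, in particular the bound $\pi(y_k)\ge\tau(k)$ for the \emph{arbitrary} completions $y_k$. It holds because each $y_k$ lies in a base of $\tilde\B$ and is therefore a nonloop.
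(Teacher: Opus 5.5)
Your proof is correct and follows the paper's argument essentially step by step. It applies the first assertion of \Cref{lem:priority} to $\{x_k\mid k\in C\}\cup\{a\}$ to keep $T_i$ and the arrow to $j$, and it derives the same counting contradiction with the second assertion at the prefix ending in $j'$ (the paper's set $P$), using $\pi(a')\ge j$, $\pi(y_k)\ge\tau(k)$, and the fact that $\tau$ permutes $C$.
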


\begin{proof}
Let $\tilde{\B}'$ be the updated residual base family.

We first show that $T_i$ remains the top class.
By \eqref{eq:tau-pi}, there are a nonloop $x_i\in T_i$ with $\pi(x_i)=j$ and, for each $k\in C$, a nonloop $x_k\in T_k$ with $\pi(x_k)=\tau(k)$.
These values are pairwise distinct because $j\notin C=\{\tau(k)\mid k\in C\}$.
The first assertion of \Cref{lem:priority} gives $(\omega(\bar F\setminus C)-\omega(j))\cup\{x_k\mid k\in C\}\cup\{x_i\}\in\tilde\B$, and \eqref{eq:residual} yields $\omega(\bar F\setminus C)-\omega(j)+x_i\in\tilde{\B}'$.
Thus, $T_i$ contains a nonloop of $\tilde{\B}'$.
By \Cref{prop:cyclefeasible}, the update creates no new nonloop.
Therefore, no class that agent $i$ strictly prefers to $T_i$ contains a nonloop of $\tilde{\B}'$, and $T_i$ remains her top class.

Since $j$ remains available, it suffices to rule out a target $h<j$ after processing $C$.
Suppose that such a target $h\in\bar F\setminus C$ is available to $i$.
By \eqref{eq:residual}, there are an object $a\in T_i$ and objects $z_k\in T_k$ for $k\in C$ such that
\begin{align*}
  (\omega(\bar F\setminus C)-\omega(h)+a)\cup\{z_k\mid k\in C\}\in\tilde{\B}.
\end{align*}
Let $P=\{k\in\bar F\mid k\le h\}$.
By the second assertion of \Cref{lem:priority} with $\ell=|P|$, at least $|P|$ elements $z$ of this base must satisfy $\pi(z)\in P$.
Since $\pi(\omega(k))=k$ for every $k\in\bar F$, we have
\begin{align*}
  \{z\in\omega(\bar F\setminus C)-\omega(h)\mid\pi(z)\in P\}
  =\{\omega(k)\mid k\in P\setminus(C\cup\{h\})\}.
\end{align*}
This set has $|P|-|C\cap P|-1$ elements because $h\in P\setminus C$.
By \eqref{eq:tau-pi}, $\pi(a)\ge j>h$, and hence $\pi(a)\notin P$.
Also by \eqref{eq:tau-pi}, we have $\pi(z_k)\ge\tau(k)$ for every $k\in C$.
Since $\tau$ permutes $C$, at most $|C\cap P|$ of the objects $z_k$ satisfy $\pi(z_k)\in P$.
Consequently, at most $(|P|-|C\cap P|-1)+|C\cap P|=|P|-1$ elements $z$ of this base satisfy $\pi(z)\in P$, contradicting the required lower bound.
\end{proof}

Processing one cycle leaves every other cycle available with the same top classes, by \Cref{lem:persistence}.
This preservation property yields order independence.

\begin{proposition}\label{prop:order}
For every reported profile, all orders of processing selected cycles fix the same class $T_i$ for each agent $i$.
Moreover, some feasible allocation $\mu$ satisfies $\mu(i)\in T_i$ for every agent $i$.
Consequently, the final allocation of TCTC does not depend on which cycle is selected at each step.
\end{proposition}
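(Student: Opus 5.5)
We argue with a diamond-lemma style exchange argument on the sequence of cycle selections, using \Cref{lem:persistence} as the local confluence property. Fix a reported profile and view a run of TCTC as a sequence of states $(F,(T_f)_{f\in F})$. At any state, the pointing graph is a functional graph on $\bar F$, so its directed cycles are pairwise vertex-disjoint. The claim we prove by induction on $|\bar F|$ is this: from any reachable state, every complete run fixes the same class for every agent. The case $|\bar F|=0$ is trivial.

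For the inductive step, take a state and two runs whose first selected cycles are $C$ and $C'$. If $C=C'$, the induction hypothesis applied to the successor state finishes the step. Otherwise $C\cap C'=\emptyset$.

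Every $i\in C'$ satisfies $i\notin C$ and $\tau(i)\in C'$, so $\tau(i)\notin C$. By \Cref{lem:persistence}, after processing $C$ every agent of $C'$ keeps the same top class and the same target. Hence $C'$ is still a cycle of the new pointing graph, and it can be processed next with the same classes. Symmetrically, after processing $C'$ the cycle $C$ persists with the same classes.

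Processing $C$ then $C'$, or $C'$ then $C$, leads to the same state. The processed set is $F\cup C\cup C'$, the fixed classes coincide, and the residual family is determined by \eqref{eq:residual} from $\B$ and these classes alone. Call this state $\Sigma$. Applying the induction hypothesis at the state after $C$ shows that the given run starting with $C$ agrees with a run that goes through $\Sigma$. Applying it at the state after $C'$ shows the same for the run starting with $C'$. A final application at $\Sigma$ shows that both runs fix the same classes. This proves the first assertion.

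For the existence of $\mu$, consider the terminal state $F=I$. \Cref{prop:cyclefeasible}, applied inductively, keeps the invariant $\omega(\bar F)\in\tilde\B$ at each step. In particular, at the last step the residual family for $F=I$ contains $\omega(\emptyset)=\emptyset$. By \eqref{eq:residual}, this gives a base $\{y_i\mid i\in I\}\in\B$ with $y_i\in T_i$. Since $T_i\subseteq E_i$ and the $E_i$ are disjoint, this base contains exactly one object of each $E_i$. It is therefore a feasible allocation $\mu$ with $\mu(i)\in T_i$.

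For the last assertion, the output is the $\triangleleft$-minimum base of $\B$ meeting every fixed class. The set of such bases is nonempty by the previous paragraph. It depends only on the classes, which are order-independent, and $\triangleleft$ does not depend on the run. Each such base meets every $T_i\subseteq E_i$ and has size $n$, so it is an allocation, and the final allocation is independent of the selection order.

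The main obstacle is checking that \Cref{lem:persistence} applies to every agent of $C'$, that is, that both $i$ and $\tau(i)$ lie outside $C$. This follows from the disjointness of cycles in a functional graph. We also have to make sure that the state reached after two commuting steps is literally identical, which holds because the state records only $F$ and the classes.
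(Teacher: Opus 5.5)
Your proposal is correct and follows essentially the same route as the paper. Both arguments induct on the number of remaining agents and commute two disjoint first cycles via \Cref{lem:persistence} to reach a common state. Both then obtain the completion $\mu$ at $F=I$ from the invariant of \Cref{prop:cyclefeasible} together with \eqref{eq:residual}. Your explicit checks that $\tau(i)\notin C$ for $i\in C'$, and that the commuted states coincide literally, spell out steps the paper leaves brief.
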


\begin{proof}
We prove that all executions from any reachable state fix the same classes, by induction on the number of remaining agents.
The claim is immediate when no agent remains.
Consider two executions from the same state, and let $C$ and $D$ be their first cycles.
If $C=D$, the two executions fix the same classes by the induction hypothesis applied after this cycle.
Otherwise, $C$ and $D$ are disjoint, and \Cref{lem:persistence} shows that each remains available with the same classes and pointing targets after the other is processed.
After $C$ is processed, fewer agents remain, and the induction hypothesis implies that all continuations fix the same classes.
We may therefore choose a continuation that processes $D$ next without changing the classes fixed by the first execution.
Similarly, the induction hypothesis after $D$ allows us to choose $C$ next in the second execution.
After these two steps, both executions have the same processed set and fixed classes.
By \eqref{eq:residual}, they also have the same residual base family, and the remaining agents have the same reports.
The executions can therefore follow the same sequence of cycles and fix the same classes for all remaining agents.
Thus, the two original executions fix the same classes for all agents.

When the mechanism terminates, every agent's class has been fixed and $F=I$.
At this point, \Cref{prop:cyclefeasible} and \eqref{eq:residual} guarantee a feasible allocation $\mu$ with $\mu(i)\in T_i$ for every $i$.
Since the final selection depends only on $(T_i)_{i\in I}$, the outcome is also independent of the cycle order.
\end{proof}

\subsection{Welfare and incentives}\label{subsec:welfare}

We now prove that TCTC satisfies IR, PE, and WGSP, and establish a core property.
We begin with a lemma that will be used in the proofs of both PE and WGSP.

\begin{lemma}\label{lem:domination}
Fix a reported profile, and let $\mu$ be the outcome of TCTC.
Let $F$ be the set of agents processed before the cycle containing agent $i$, under any order of cycle processing.
If a feasible allocation $\nu$ satisfies $\nu(f)\in T_f$ for every $f\in F$, then $\mu(i)\succsim_i\nu(i)$.
\end{lemma}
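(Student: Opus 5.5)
The plan is to show that $\nu(i)$ is a nonloop of the residual base family $\tilde{\B}$ at the state with processed set $F$. Then agent $i$'s current top class $T_i$ at that state is weakly preferred to the class of $\nu(i)$, because $T_i$ is defined as her most preferred class containing a nonloop of $\tilde{\B}$. I also need that the class fixed for $i$ when her cycle is processed equals $T_i$ computed at this state. The cycle containing $i$ is processed right after $F$, so its top classes are those computed at the state with processed set $F$. Finally, \Cref{prop:order} shows that $\mu(i)$ lies in this fixed class, since the final selection picks an object from every fixed class. Chaining these gives $\mu(i)\sim_i$ (any element of $T_i$) $\succsim_i\nu(i)$.

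The core step is the nonloop claim. Set $A\coloneqq\{\nu(j)\mid j\in\bar F\}$. Since $\nu$ is an allocation, $A\subseteq E_{\bar F}$ and $|A|=|\bar F|$. By hypothesis, $\nu(f)\in T_f$ for every $f\in F$, so $A\cup\{\nu(f)\mid f\in F\}=\nu\in\B$. Taking $y_f=\nu(f)$ in \eqref{eq:residual} gives $A\in\tilde{\B}$. Hence $\nu(i)\in A$ belongs to a base of $\tilde{\B}$ and is a nonloop, since $i\in\bar F$ (agent $i$ is not processed before her own cycle).

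One subtlety concerns the phrase ``under any order of cycle processing.'' The set $F$ depends on the chosen execution, so the classes $T_f$ for $f\in F$ should be read as those fixed in that execution. By \Cref{prop:order} they coincide with the classes fixed in every execution, so they are well defined. Likewise, $\mu$ is the same for every execution. I will take a particular execution realizing $F$ and argue within it: the state just before $i$'s cycle has processed set exactly $F$ with the fixed classes $(T_f)_{f\in F}$.

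The main obstacle, mild as it is, is bookkeeping. I must make sure that the top class $T_i$ computed at that state is the one fixed for $i$. It is, because processing a cycle fixes the current classes computed at that state. I also need $\mu(i)\in T_i$, which holds because the returned base contains an object from every fixed class and exactly one object from each $E_i$. No exchange argument is needed beyond membership in $\tilde{\B}$.
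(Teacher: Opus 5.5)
Your proposal is correct and follows the paper's proof: you show $\{\nu(k)\mid k\in\bar F\}\in\tilde{\B}$ by taking $y_f=\nu(f)$ in \eqref{eq:residual}, so $\nu(i)$ is a nonloop and the fixed top class $T_i$ is weakly preferred to it, and then use \Cref{prop:order} to get $\mu(i)\in T_i$. Your extra remark that the classes $T_f$ do not depend on the execution, again by \Cref{prop:order}, is a reasonable clarification that the paper leaves implicit.
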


\begin{proof}
Let $\tilde\B$ be the residual base family immediately before this cycle is processed.
Because $\nu\in\B$ and $\nu(f)\in T_f$ for every $f\in F$, \eqref{eq:residual} shows that $\{\nu(k)\mid k\in\bar F\}$ is a base of $\tilde{\B}$.
Hence, $\nu(i)$ is a nonloop of $\tilde{\B}$.
Processing the cycle containing $i$ fixes $T_i$, her most preferred class that contains a nonloop of $\tilde\B$, and therefore every object in $T_i$ is weakly preferred by $i$ to $\nu(i)$.
By \Cref{prop:order}, $\mu(i)\in T_i$.
\end{proof}

IR follows because each agent's initial endowment remains available until she is processed.
We use \Cref{lem:domination} to rule out Pareto improvements.

\begin{proposition}\label{prop:irpe}
TCTC is individually rational and Pareto efficient for the reported weak preferences.
\end{proposition}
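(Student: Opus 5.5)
For IR, I will fix an agent $i$ and look at the state just before the cycle containing $i$ is processed, with processed set $F$ and residual family $\tilde\B$. The invariant $\omega(\bar F)\in\tilde\B$ holds there: it holds initially and is preserved by \Cref{prop:cyclefeasible}. Since $i\in\bar F$, the object $\omega(i)$ lies in a base of $\tilde\B$ and is therefore a nonloop. The current top class $T_i$ is the most preferred class containing a nonloop, so every object of $T_i$ is weakly preferred to $\omega(i)$. By \Cref{prop:order}, $\mu(i)\in T_i$, which gives $\mu(i)\succsim_i\omega(i)$. (Equivalently, I can apply \Cref{lem:domination} with $\nu$ chosen as any feasible allocation agreeing with the fixed classes on $F$ and with $\omega$ elsewhere, which exists by the invariant.)

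For PE, I argue by contradiction. Fix one execution of TCTC, and let $C_1,C_2,\ldots,C_m$ be the processed cycles in order. Suppose a feasible $\nu$ satisfies $\nu(i)\succsim_i\mu(i)$ for all $i$, with strict preference for some $k$. By induction on $t$, I will show that $\nu(i)\in T_i$ for every $i\in C_1\cup\cdots\cup C_t$.

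Assume the claim for all earlier cycles, and let $i\in C_t$. The set $F$ of agents processed before $C_t$ is then covered by the hypothesis, so $\nu(f)\in T_f$ for all $f\in F$. The argument of \Cref{lem:domination} shows that $\{\nu(j)\mid j\in\bar F\}\in\tilde\B$. Hence $\nu(i)$ is a nonloop of $\tilde\B$, so $T_i\succsim_i\nu(i)$, that is, $\mu(i)\succsim_i\nu(i)$. Combined with $\nu(i)\succsim_i\mu(i)$, this gives $\nu(i)\sim_i\mu(i)$. Because $T_i$ is an indifference class and $\nu(i)\in E_i$, it follows that $\nu(i)\in T_i$.

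At the end of the induction, every agent is indifferent between $\nu(i)$ and $\mu(i)$, which contradicts strictness for $k$. The only delicate point is maintaining the hypothesis of \Cref{lem:domination} along the induction. The step that requires care is deducing class membership from indifference, which uses that $T_i$ is an entire indifference class of $\succsim_i$ on $E_i$.
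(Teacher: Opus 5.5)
Your proof is correct and takes essentially the same route as the paper. For IR, you use that $\omega(i)$ remains a nonloop of the residual family until $i$ is processed. For PE, you use the argument of \Cref{lem:domination}. The only difference is cosmetic: the paper takes the first processed cycle containing a strictly improving agent and derives the contradiction there, while you run the equivalent forward induction showing $\nu(i)\in T_i$ for every agent.
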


\begin{proof}
Let $\mu$ be the outcome.
Immediately before agent $i$ is processed, the base $\omega(\bar{F})$ of the current residual base family contains $\omega(i)$, and hence $\omega(i)$ is a nonloop of that family.
The mechanism fixes agent $i$'s best class among all such nonloops and eventually assigns her an object in that class.
Therefore, $\mu(i)\succsim_i\omega(i)$.

For PE, suppose that a feasible allocation $\nu$ Pareto dominates $\mu$.
Fix any cycle order and consider the first processed cycle that contains an agent $i$ with $\nu(i)\succ_i\mu(i)$.
Let $F$ be the agents processed before this cycle.
For every $f\in F$, Pareto dominance and the choice of the cycle imply $\nu(f)\sim_f\mu(f)$, and since $T_f$ is the entire indifference class of $\mu(f)$, we have $\nu(f)\in T_f$.
\Cref{lem:domination} gives $\mu(i)\succsim_i\nu(i)$, a contradiction.
\end{proof}

Order independence lets the truthful and deviating executions begin with the same sequence of cycles that contain no coalition agent.

\begin{proposition}\label{prop:wgsp}
TCTC is weakly group-strategy-proof.
\end{proposition}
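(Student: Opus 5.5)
Suppose, for contradiction, that a coalition $Q$ with joint report $\succsim'_Q$ makes every member strictly better off. Let $\mu=\Phi(\succsim_I)$ be the truthful outcome and $\mu'=\Phi(\succsim'_Q,\succsim_{-Q})$ the manipulated one, so that $\mu'(q)\succ_q\mu(q)$ for all $q\in Q$. The plan is to run the truthful execution and locate the first cycle that contains a coalition member. We then show that this member could not have been strictly better off, using \Cref{lem:domination} for the truthful profile with $\nu=\mu'$.

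First, by \Cref{prop:order}, we fix a cycle order for the truthful execution in which cycles containing no coalition member are processed first, for as long as possible. Let $F$ be the set of agents processed before the first cycle $C$ that meets $Q$. The key step is to show that the manipulated execution can process exactly the same cycles, with the same fixed classes, before touching $Q$. We prove this by induction along the sequence. At a state reached identically in both runs, the residual base family $\tilde\B$ is the same. Every agent outside $Q$ reports truthfully, so she computes the same top class and the same target $\tau$ by \eqref{eq:target}. Hence a cycle consisting only of non-coalition agents exists in both pointing graphs. By the order independence of \Cref{prop:order} applied to the manipulated profile, we may choose to process it there as well. Consequently, $\mu'(f)\in T_f$ for every $f\in F$, where $T_f$ are the classes fixed in the truthful run. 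This uses the fact that the manipulated outcome selects an object from each fixed class, which holds since the manipulated run fixes these same classes.

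Now pick $i\in C\cap Q$. The allocation $\mu'$ is feasible and satisfies $\mu'(f)\in T_f$ for all $f\in F$. \Cref{lem:domination}, applied to the truthful profile with $\nu=\mu'$, gives $\mu(i)\succsim_i\mu'(i)$. This contradicts $\mu'(i)\succ_i\mu(i)$. Notice that the argument never needs the behaviour of coalition members under their misreports: it only needs that the $\bar Q$-only prefix is shared by both runs.

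The main obstacle is making the shared prefix rigorous. Specifically, we must guarantee that the truthful run can be ordered so that all $Q$-free cycles come first and that this prefix is reproducible in the manipulated run, even though coalition agents point differently there. The pointing of coalition agents may create different cycles in the manipulated run, but those cycles contain a $Q$ member, so they are simply never chosen in the prefix. Order independence in the manipulated run then lets us defer them. A small point to check is that \Cref{lem:domination} only requires $F$ to be the set processed before $i$'s cycle under some valid order, which our chosen truthful order provides.
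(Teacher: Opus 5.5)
Your proposal is correct and follows essentially the same argument as the paper. You process $Q$-free cycles first in the truthful run and note that the manipulated run can replicate this prefix, because the top classes and targets of non-coalition agents depend only on their unchanged reports and the shared state. You then use \Cref{prop:order} to get $\mu'(f)\in T_f$ for all $f\in F$, and apply \Cref{lem:domination} with $\nu=\mu'$ to a coalition member on the next cycle.
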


\begin{proof}
Fix a truthful profile $\succsim_I$, a coalition $Q$, and reports $\succsim'_Q$.
Let $\mu=\Phi(\succsim_I)$ and $\mu'=\Phi(\succsim'_Q,\succsim_{-Q})$, and suppose, contrary to WGSP, that $\mu'(i)\succ_i\mu(i)$ for every $i\in Q$.

Run the truthful execution as follows: as long as the pointing graph contains a cycle with no agent in $Q$, process such a cycle.
Let $F\subseteq I\setminus Q$ be the set of agents processed when this stops.
The deviating execution can process the same sequence of cycles.
Indeed, at each step the two executions have the same processed set and the same residual base family, and the top class and pointing target of an agent outside $Q$ depend only on her unchanged report and this state.
Hence, the same cycle with no agent in $Q$ is present in both pointing graphs, and processing it fixes the same classes in both executions.
By \Cref{prop:order}, the classes fixed for the agents in $F$ are their final classes under both profiles, and in particular $\mu'(f)\in T_f$ for every $f\in F$.

At the state reached by the truthful execution, every cycle of the pointing graph contains an agent in $Q$.
Choose such a cycle and an agent $i\in Q$ on it.
\Cref{lem:domination} applied to $\nu=\mu'$ gives $\mu(i)\succsim_i\mu'(i)$, contradicting $\mu'(i)\succ_i\mu(i)$.
\end{proof}

\Cref{lem:domination} also rules out coalitions that make every member strictly better off through an independent partial allocation whose objects span the same flat as their initial endowments in the original matroid.

\begin{corollary}\label{cor:core}
Let $\mu$ be the outcome of TCTC.
Let $Q\subseteq I$ be a nonempty coalition and let $\nu\colon Q\to E$ be a partial allocation with $\nu(i)\in E_i$ for every $i\in Q$.
If $\{\nu(i)\mid i\in Q\}$ is independent and
\begin{align*}
  \cl_{\B}(\{\nu(i)\mid i\in Q\})=\cl_{\B}(\omega(Q)),
\end{align*}
then there is an $i\in Q$ with $\mu(i)\succsim_i\nu(i)$.
\end{corollary}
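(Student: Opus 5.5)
The plan is to argue by contradiction and reduce to \Cref{lem:domination}. Suppose that $\nu(i)\succ_i\mu(i)$ for every $i\in Q$. Fix any cycle order, and consider the first processed cycle that contains an agent of $Q$; call that agent $i$. Let $F$ be the set of agents processed before this cycle, so that $F\cap Q=\emptyset$ and $Q\subseteq\bar F$. The goal is to build a feasible allocation $\tilde\nu$ with
\begin{itemize}
  \item $\tilde\nu(f)\in T_f$ for every $f\in F$, and
  \item $\tilde\nu(k)=\nu(k)$ for every $k\in Q$.
\end{itemize}
\Cref{lem:domination} then gives $\mu(i)\succsim_i\tilde\nu(i)=\nu(i)$, which contradicts the assumption.

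To construct $\tilde\nu$, I will use the invariant from \Cref{prop:cyclefeasible}: immediately before this cycle is processed, $\omega(\bar F)$ belongs to the residual base family $\tilde\B$. By \eqref{eq:residual}, there are objects $y_f\in T_f$ for $f\in F$ such that
\begin{align*}
B_0\coloneqq\omega(\bar F)\cup\{y_f\mid f\in F\}\in\B.
\end{align*}
This $B_0$ is a feasible allocation, and it contains $\omega(Q)$. Define $B_1\coloneqq(B_0\setminus\omega(Q))\cup\{\nu(k)\mid k\in Q\}$. Since $\nu(k)\in E_k$ for $k\in Q$, while $B_0\setminus\omega(Q)$ consists only of objects in $E_{I\setminus Q}$, this union is disjoint. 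Hence $B_1$ has exactly $n$ elements and picks one object from each $E_j$; it is an allocation that agrees with $B_0$ outside $Q$ and with $\nu$ on $Q$.

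The step that needs care is showing $B_1\in\B$. Here I use the closure hypothesis. We have $\cl_\B(B_1)\supseteq\cl_\B(\{\nu(k)\mid k\in Q\})=\cl_\B(\omega(Q))\supseteq\omega(Q)$, and $\cl_\B(B_1)$ also contains $B_0\setminus\omega(Q)$. Therefore $\cl_\B(B_1)\supseteq B_0$, so $r_\B(B_1)=r_\B(B_0)=n$. Because $|B_1|=n$, $B_1$ is a base. The independence of $\{\nu(k)\mid k\in Q\}$ guarantees that $\nu$ takes $|Q|$ distinct values, matching $|\omega(Q)|$; this is implicit in the count $|B_1|=n$, which also follows from the disjointness above.

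Setting $\tilde\nu\coloneqq B_1$ yields a feasible allocation with $\tilde\nu(f)=y_f\in T_f$ for all $f\in F$ and $\tilde\nu(i)=\nu(i)$. Applying \Cref{lem:domination} with this $F$ gives $\mu(i)\succsim_i\nu(i)$, which completes the contradiction. I expect the only delicate point to be the rank argument for $B_1$. It relies on equality of flats in the original matroid $\B$ rather than in the residual family, which is exactly why the statement is phrased with $\cl_\B$.
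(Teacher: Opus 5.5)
Your proof is correct and follows the paper's argument: take the first processed cycle that contains an agent of $Q$, complete $\omega(\bar F)$ to a base using the fixed classes of the earlier agents, swap $\omega(Q)$ for $\{\nu(k)\mid k\in Q\}$, use the equality of flats to show the result spans that base and is therefore a feasible allocation, and apply \Cref{lem:domination}. The contradiction wrapper is unnecessary, and the values $\nu(k)$ are distinct because the sets $E_k$ are pairwise disjoint, not because of independence; neither point affects correctness.
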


\begin{proof}
Fix $Q$ and $\nu$ as in the statement.
Consider the first processed cycle containing an agent in $Q$, and let $F\subseteq I\setminus Q$ be the set of agents processed before this cycle.
By \Cref{prop:cyclefeasible} and \eqref{eq:residual}, there is a base $B\in\B$ containing $\omega(\bar F)$ and one object from each fixed class $T_f$ for $f\in F$.
Define
\begin{align*}
  \widehat\nu=(B\setminus\omega(Q))\cup\{\nu(i)\mid i\in Q\}.
\end{align*}
Since $\{\nu(i)\mid i\in Q\}$ and $\omega(Q)$ span the same flat, $\widehat\nu$ spans $B$.
It has $n$ elements and hence is a base of $\B$.
It contains exactly one object from every $E_i$ and satisfies $\widehat\nu(f)\in T_f$ for every $f\in F$.
Thus, $\widehat\nu$ is a feasible allocation to which \Cref{lem:domination} applies.
For an agent $i\in Q$ on the selected cycle, that lemma gives $\mu(i)\succsim_i\widehat\nu(i)=\nu(i)$.
\end{proof}

\Cref{cor:core} strengthens individual rationality, which follows by taking $Q=\{i\}$ and $\nu(i)=\omega(i)$.
For a general coalition $Q$, the partial allocation $\nu$ can be extended to a feasible allocation while leaving every agent outside $Q$ with her initial endowment.
Indeed, since the objects assigned to $Q$ span the same flat as $\omega(Q)$, the set $\omega(I\setminus Q)\cup\{\nu(i)\mid i\in Q\}$ is a base of $\B$ and contains exactly one object from each $E_i$.

In the standard housing market with $n$ agents and $n$ houses, represented by a partition matroid with one capacity-one block per house, $\cl_{\B}(\omega(Q))$ consists of the assignment options for houses initially owned by members of $Q$.
Thus, \Cref{cor:core} specializes to the usual core condition that rules out strict improvement for every member of a coalition~\citep{ShapleyScarf1974}.

\subsection{Computation}\label{sec:computation}

We show that TCTC can be implemented in polynomial time given a membership oracle for $\B$.

\begin{proposition}\label{prop:complexity}
Given a membership oracle for $\B$, TCTC can be implemented in time polynomial in $n$ and $|E|$, counting each oracle call as one operation.
\end{proposition}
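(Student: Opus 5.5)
The plan is to implement the residual family $\tilde\B$ implicitly through a membership test. Every step of TCTC (top classes, pointing targets, the final selection) then reduces to polynomially many membership queries to $\tilde\B$, and each such query reduces to polynomially many oracle calls for $\B$.

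\textbf{Membership in $\tilde\B$.} Given $A\subseteq E_{\bar F}$ with $|A|=|\bar F|$, we must decide whether some choice $(y_f)\in\prod_{f\in F}T_f$ satisfies $A\cup\{y_f\}\in\B$. This is exactly the question of whether the matroid $\B$ contracted by $A$ has a common independent set of size $|F|$ with the partition matroid whose blocks are $T_f$ (capacity one each). So it is a matroid intersection problem.

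First check that $A$ is independent in $\B$. Independence of a set $X$ can be tested with the base oracle by greedily extending $X$ over $E$: adding elements one at a time and testing extendability is circular, so instead we obtain an independence oracle from the base oracle via the standard reduction. Concretely, first find one base $B_0$; the initial endowment $\omega$ is one. Then $X$ is independent if and only if the greedy procedure that starts from $X$ and adds elements of $B_0$ in sequence, keeping a set only if it is still extendable, ends in a base. This needs care. A cleaner route is to compute the rank $r(X)=n-\min_{B}|B\setminus X|$ by basis exchange: start from a base and repeatedly exchange an element outside $X$ for an element of $X$ while membership holds. Basis exchange guarantees that a local optimum of $|B\cap X|$ is a global optimum, so this takes $O(n|E|)$ oracle calls per step and at most $n$ steps. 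With a rank oracle for $\B$, the contraction by $A$ also has a rank oracle, and Edmonds' matroid intersection algorithm, using polynomially many rank or independence queries, decides whether a common independent set of size $|F|$ exists and returns one. This is the step I expect to be the main technical obstacle: carefully deriving an independence/rank oracle from the base-membership oracle and invoking matroid intersection. Everything else is bookkeeping.

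\textbf{One iteration.} Nonloops of $\tilde\B$ are exactly the $a$ for which $\pi(a)$ exists. For each remaining agent $i$, each $a\in E_i$, and each $j\in\bar F$, we test whether $\omega(\bar F)-\omega(j)+a\in\tilde\B$. That is $O(n|E|)$ membership tests, and they yield all $\pi(a)$, the top classes $T_i$ (scan the classes of $i$ in preference order), and $\tau$ via \eqref{eq:tau-pi}. A cycle of the functional graph $\tau$ is found in $O(n)$ time. Each iteration fixes at least one agent, so there are at most $n$ iterations.

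\textbf{Final selection.} We must return the $\mathrel{\triangleleft}$-minimum base containing one element of each $T_i$, that is, the base $B$ whose symmetric difference with any competitor has its $\sqsubset$-minimum element in $B$. Scan $E$ in $\sqsubset$-increasing order while maintaining the current forced set $X$ (initially empty). For each element $e\in T_i$, include $e$ if there still exists a base containing $X+e$ with one element from each class, and exclude it otherwise. Each test is a matroid intersection of $\B$ contracted by $X+e$ with the partition matroid on the classes not yet hit. Feasibility at the start follows from \Cref{prop:order}, and including elements greedily in $\sqsubset$ order yields the lexicographically preferred, i.e. $\mathrel{\triangleleft}$-minimum, feasible base. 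Since each fixed class $T_i$ lies in $E_i$, this base is the allocation. Summing up gives $O(n^2|E|)$ matroid intersection instances, each polynomial in $n$ and $|E|$ oracle calls, which proves the proposition.
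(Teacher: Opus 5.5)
Your proof is correct, and its skeleton matches the paper's. It has at most $n$ iterations, reduces membership in $\tilde\B$ to matroid intersection of $\B$ with a partition matroid on the fixed classes, and uses $O(n|E|)$ tests of $\omega(\bar F)-\omega(j)+a\in\tilde\B$ per iteration. Two steps take a different route.

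\textbf{The oracle conversion.} The paper simply cites that the common-base problem is polynomial given a base-membership oracle and the known base $\omega(I)$ (Schrijver; Lov\'asz--Recski). You make this conversion explicit by computing $r(X)=\max_{B\in\B}|B\cap X|$ through single-exchange local search starting from $\omega$. This is valid: if $|B\cap X|<r(X)$, augmentation gives $x\in X\setminus B$ with $(B\cap X)+x$ independent. The fundamental circuit $C_B(x)$ must then contain some $b\in B\setminus X$, so $B-b+x\in\B$. The argument also shows exactly why a known base is needed. You should delete the abandoned first attempt at an independence oracle, which, as you note yourself, is circular.

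\textbf{The final selection.} The paper solves one minimum-weight common-base problem with weights $-2^{m-k}$, which requires weighted matroid intersection and $O(m)$-bit weights. You instead fix elements greedily in $\sqsubset$ order, using $|E|$ further unweighted feasibility tests. Your route needs only the feasibility test you already built and no large numbers; the paper's is a single black-box call.

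Your greedy step needs a short justification that you omit. A rejected element stays infeasible as the forced set $X$ grows. Hence a completion containing the final $X$, which exists by the invariant started from Proposition~\ref{prop:order}, cannot contain any rejected element, so it equals $X$. Now suppose some other completion $B'$ had the $\sqsubset$-minimum element $e$ of $X\mathbin{\triangle}B'$ in $B'$. At $e$'s turn, the forced set together with $e$ was contained in $B'$, so $e$ would have been accepted. Therefore $X$ is the $\mathrel{\triangleleft}$-minimum completion.
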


\begin{proof}
The mechanism performs at most $n$ iterations because each iteration fixes the class of at least one remaining agent.
It suffices to show that the top classes, pointing targets, and final allocation can be computed in polynomial time.
Fix a state with processed set $F$ and a set $A\subseteq E_{\bar F}$ of size $|\bar F|$.
Consider the partition matroid whose bases consist of $A$ together with one object from each fixed class $T_f$.
By \eqref{eq:residual}, $A\in\tilde\B$ if and only if this matroid and $\B$ have a common base.
The common-base problem is solvable in polynomial time using the membership oracle for $\B$ and the known base $\omega(I)$ (\citealp[Theorem~41.4]{Schrijver2003} and \citealp[p.~175]{LovaszRecski1982}).
Testing whether $\omega(\bar F)-\omega(j)+a\in\tilde\B$ for every $j\in\bar F$ and $a\in E_{\bar F}$ determines the nonloops, top classes, and pointing targets, with candidates of smaller cardinality rejected immediately.
There are at most $|\bar F|\cdot|E_{\bar F}|$ such tests per iteration, giving $O(n^2|E|)$ tests in total.
Finding a cycle and updating the state also take polynomial time.

For the final selection, feasible completions are the common bases of $\B$ and the partition matroid with capacity-one blocks $T_i$ for all $i\in I$ and all other objects treated as loops.
Enumerate the objects as $e_1\sqsubset\cdots\sqsubset e_m$, where $m=|E|$, and assign weight $-2^{m-k}$ to $e_k$.
The minimum-weight common base is exactly the minimum feasible completion with respect to $\triangleleft$.
Since each weight has $O(m)$ bits, finding a minimum-weight common base in these two matroids takes polynomial time~\citep[Theorem~41.8]{Schrijver2003}.
\end{proof}

\section*{Acknowledgments}
This work was supported by JSPS KAKENHI Grant Number JP25K00137 and JST ERATO Grant Number JPMJER2301.

The author used AI tools, including OpenAI's Codex, to assist with developing research ideas, constructing and checking proofs, and drafting and revising the manuscript.
The author takes full responsibility for the content of this paper, including all mathematical claims and proofs.

\bibliographystyle{plainnat}
\bibliography{references}

\end{document}